\renewcommand{\maketag@@@}[1]{\hbox{\m@th\normalsize\normalfont#1}}%
\def\BibTeX{{\rm B\kern-.05em{\sc i\kern-.025em b}\kern-.08em
    T\kern-.1667em\lower.7ex\hbox{E}\kern-.125emX}}
\newtheorem{myTheo}{Theorem}
\begin{document}
\title{AoI-Aware Task Offloading and Transmission Optimization for Industrial IoT Networks: A Branching Deep Reinforcement Learning Approach}
\author{Yuang Chen, \IEEEmembership{Graduate Student Member,~IEEE,} Fengqian Guo, Chang Wu, Shuyi Liu, \\Hancheng Lu, \IEEEmembership{Senior Member,~IEEE,} Chang Wen Chen, \IEEEmembership{Life Fellow,~IEEE}
\thanks{This work was supported in part by the National Science Foundation of China under Grant U21A20452, in part by Hong Kong Research Grants Council (GRF-15213322, GRF-15229423). Yuang Chen, Chang Wu, Shuyi Liu, Fengqian Guo, and Hancheng Lu are with the Laboratory of Future Networks, University of Science and Technology of China (USTC), Hefei, P. R. China (e-mail: \{yuangchen21, changwu, lsy576083832\}@mail.ustc.edu.cn; fqguo@ustc.edu.cn; hclu@ustc.edu.cn). Yuang Chen and Hancheng Lu are also with Deep Space Exploration Laboratory, Hefei 230088, China. Chang Wen Chen is with the Department of Computing, The Hong Kong Polytechnic University, Hong Kong, China (e-mail: changwen.chen@polyu.edu.hk). 
}}
\maketitle

\begin{abstract}
In the Industrial Internet of Things (IIoT), the frequent transmission of large amounts of data over wireless networks should meet the stringent timeliness requirements. Particularly, the freshness of packet status updates has a significant impact on the system performance. In this paper, we propose an age-of-information (AoI)-aware multi-base station (BS) real-time monitoring framework to support extensive IIoT deployments. To meet the freshness requirements of IIoT, we formulate a joint task offloading and resource allocation optimization problem with the goal of minimizing long-term average AoI. Tackling the core challenges of combinatorial explosion in multi-BS decision spaces and the stochastic dynamics of IIoT systems is crucial, as these factors render traditional optimization methods intractable. Firstly, an innovative branching-based Dueling Double Deep Q-Network (Branching-D3QN) algorithm is proposed to effectively implement task offloading, which optimizes the convergence performance by reducing the action space complexity from exponential to linear levels. Then, an efficient optimization solution to resource allocation is proposed by proving the semi-definite property of the Hessian matrix of bandwidth and computation resources. Finally, we propose an iterative optimization algorithm for efficient joint task offloading and resource allocation to achieve optimal average AoI performance. Extensive simulations demonstrate that our proposed Branching-D3QN algorithm outperforms both state-of-the-art DRL methods and classical heuristics, achieving up to a 75\% enhanced convergence speed and at least a 22\% reduction in the long-term average AoI.
\end{abstract}

\begin{IEEEkeywords}
Age of Information (AoI); Delay-Sensitive Tasks; Deep Reinforcement Learning (DRL); Task Offloading; Resource Allocation; Industrial Internet of Things (IIoT).
\end{IEEEkeywords}

\section{Introduction}

\IEEEPARstart{W}{ith} the gradual maturation of fifth-generation mobile communication (5G) technologies and the large-scale commercial application of the Internet of Things (IoT) technology, modern communications have entered a new era of interconnected everything, and people's daily lives have undergone significant transformations \cite{11037391, 10255264, 9129776}. Currently, there are 18.8 billion active IoT devices globally, with Industrial IoT (IIoT) accounting for 60\% of new IoT device installations \cite{SQMagazine2025}. IIoT is expected to seamlessly integrate sensing, communication, and control in industrial automation by leveraging wireless communication networks \cite{vitturi2019industrial, bansal2020survey}.

\par In this context, due to the global trend of extensive IIoT deployments, future networks will inevitably face increased demands for computation and communication resources to effectively handle the massive data generated by these devices and the substantial computational and transmission burdens they impose \cite{bansal2020survey, 7900337}. Given the limited computation capacity and battery power of IoT devices, relying solely on local computation typically falls short of meeting these emerging task requirements \cite{dai2022task, huang2023aoi}. Generally, IIoT devices necessitate to wirelessly offload computational tasks to resource-rich locations for further processing. Traditional cloud computing paradigms address this by offloading computationally intensive and delay-sensitive tasks to the cloud \cite{chen2025dmsa}. However, due to the dispersed distribution of IIoT devices, the considerable distance between devices and the cloud center may result in severe path loss during wireless transmission, thereby leading to inefficient data transmission \cite{10460318,10529607}. Consequently, the centralized architecture of cloud computing is ill-suited for IIoT, as it leads to excessive cloud load, low resource utilization, and intolerable latency \cite{10876766}. These limitations highlight the necessity and potential of mobile edge computing (MEC), which brings computation closer to IIoT devices, thereby enabling low-latency and high-efficiency processing \cite{chen2025dmsa,10876766,10460318,10529607}.

\par Although offloading compute-intensive tasks from devices to nearby edge servers can effectively improve quality of service (QoS) performance, non-customized task offloading and resource allocation schemes cannot fully fulfill the QoS requirements of IIoT \cite{bansal2020survey, 7900337}, since the tasks involved in IIoT devices typically have strict timeliness requirements (i.e., age of information, AoI) \cite{10355071, hu2024timeliness}, which place higher demands on the computation and communication capabilities of networks. Specifically, the AoI at time $t$ is defined as the time elapsed since the generation of the most recently received information update, i.e., $\mathrm{AoI}(t) = t - G(t)$, where $G(t)$ denotes the generation time of the latest successfully received update \cite{yates2018age,yates2021age, abd2019role}. To this end, the system needs to accurately evaluate and optimize timeliness for the task offloading and resource allocation of real-time IIoT application scenarios, especially in terms of the timeliness of task and device information updates, to ensure that the system can make correct decisions based on the latest information. Therefore, how to effectively evaluate and optimize the timeliness of information in real-time systems has become a key issue in improving the performance of IoT systems \cite{lu2018age, guo2024aoi}.

\vspace{-1.2em}

\subsection{Motivations and Challenges}

\par The core of implementing an AoI-aware real-time monitoring system to facilitate IIoT access lies in accurately evaluating how task offloading and resource allocation strategies impact the system's AoI performance \cite{11089504,9900430,10589619,huang2023aoi,10301673,9530374}. Although extensive research on transmission optimization for delay-sensitive IIoT tasks in edge networks has established a foundation and provided effective solutions for managing IIoT device access, significant challenges persist in practical applications.

\par \textbf{Existing AoI models are oversimplified and fail to model and control dynamic system behaviors accurately.} Traditional AoI models typically assume fixed and static service parameters (e.g., service rate). However, in real-world IIoT networks, service capabilities fluctuate due to factors such as varying network conditions, resource contention, and device heterogeneity \cite{6195689,8406928}. By definition, the value of AoI at time $t$ depends not only on the current state but also on prior updates, reflecting the inherent temporal correlation between information freshness and system dynamics \cite{8406928, 9546792}. This stochastic and dynamic nature of AoI optimization makes it challenging to apply conventional convex optimization methods, necessitating the adoption of more flexible optimization tools and intelligent decision-making techniques.

\par \textbf{IIoT access in multi-BS-MEC scenarios exacerbates the complexity of task offloading and resource allocation.} Most existing studies predominantly focus on single BS-MEC environments \cite{10185607, 9530374, 7900337,10355071, hu2024timeliness}, failing to capture the prevalent multi-BS configurations, which are more practical in IIoT systems. Multi-BS-MEC systems provide enhanced computing and communication resources, which are essential for supporting the offloading requirements of IIoT devices \cite{10529607,10460318, 10360267}. However, under the dynamic and random AoI metric, determining the optimal offloading strategy for multi-BS-MEC systems becomes extremely complex. The challenge lies in efficiently managing the interplay between multi-BS-MEC and a dynamic IIoT environment, requiring algorithmic refinement for efficient offloading and transmission decisions that can ensure rapid and stable convergence.

\par \textbf{Existing learning algorithms exhibit significant limitations in addressing the curse of dimensionality in multi-BS-MEC systems with IIoT access.} Learning algorithms especially deep reinforcement learning (DRL) algorithms have been considered as an effective tool for solving MDPs \cite{mnih2015human, 9013982}. However, with these algorithms, task offloading strategies in IIoT typically involve discrete binary decision variables, which render applying traditional DRL methods challenging \cite{10091508}. Although Deep Q-Networks (DQN) were introduced to tackle discrete decision-making problems, they suffer from overestimation and maximization bias due to using the same neural networks for both action evaluation and selection \cite{mnih2015human}. Double DQN (DDQN) mitigates this issue by decoupling action selection from evaluation. However, it still struggles with distinguishing the true impact of actions when states exhibit highly similar or overlapping effects \cite{van2016deep}. Dueling DQN improves on this by decomposing the Q-function into state-value functions, further reducing bias \cite{sewak2019deep}. Dueling Double DQN (D3QN) algorithm combines the architectural benefits of Dueling DQN and the bias-reduction strategy of DDQN \cite{wang2016dueling, 10155465, zabihi2023reinforcement, 10660558}. When applied to multi-BS systems with high freshness required IIoT access, the exponential growth of the action space results in a severe curse of dimensionality, significantly hindering learning efficiency and convergence \cite{zabihi2023reinforcement}.

\vspace{-1em}

\subsection{Main Contributions}

\par In this paper, we propose an AoI-aware multi-BS-MEC real-time monitoring system to support IIoT access. The system aims to minimize long-term average AoI under the constraints of delay, bandwidth, computation, energy, and task scheduling. To efficiently solve the non-convex dynamic stochastic optimization problem, we decompose it into task offloading and resource allocation subproblems and develop corresponding optimization algorithms for their solutions. The primary contributions of this paper are summarized as follows:

\begin{itemize}
    \item We propose an AoI-aware multi-BS-MEC real-time monitoring system for IIoT access and formulate a dynamic stochastic optimization problem to minimize long-term average AoI under delay, bandwidth, computation capacity, energy consumption, and task scheduling constraints.

    \item To effectively tackle this problem, we decompose it into task offloading and resource allocation subproblems. For task offloading, we design an innovative branching structure into D3QN, named the Branching-D3QN (BD3QN) algorithm, which reduces the network complexity from exponential to linear levels and alleviates the curse of dimensionality caused by the increasing number of IIoT devices and BSs, enabling faster and more stable convergence.

    \item For resource allocation, we prove its convexity by deriving the semi-definite property of the Hessian matrix of bandwidth and computational resources with respect to the long-term average AoI. Then, we propose an iterative optimization algorithm for efficient joint task offloading and resource allocation to achieve the minimization of the system's long-term average AoI.

    \item Extensive simulations demonstrate that the proposed BD3QN algorithm significantly outperforms mainstream DRL algorithms (including DQN and D3QN), achieving up to 75\% faster convergence. Moreover, when integrated into the proposed joint iterative optimization scheme, it exhibits superior scalability and resource utilization efficiency, reducing the long-term average AoI by up to 22\% compared with DQN, D3QN, Greedy, and Random offloading schemes.
\end{itemize}

\par The remainder of this paper is organized as follows. Sec. II reviews related works. Sec. III introduces the proposed system model. Sec. IV formulates the dynamic stochastic optimization problem of minimizing the system's long-term average AoI. Sec. V provides the algorithm designs and solutions. Sec. VI provides extensive performance evaluations against the state-of-the-art benchmarks. Finally, Sec. VII concludes the paper and explores future directions.

\section{RELATED WORKS}

\par This section first reviews the literature on AoI-based timeliness research concerning task offloading and resource allocation in the context of IIoT. Then, we provide a comprehensive survey of DRL-driven task offloading and resource allocation schemes in MEC networks tailored to IIoT. Finally, we identify key limitations from these studies that are closely aligned with the aforementioned challenges.

\subsection{Research on AoI-based Timeliness for Task offloading and Resource Allocation}

\par As a critical metric for quantifying the data packet's freshness, AoI has been extensively adopted in time-sensitive IIoT systems \cite{yates2021age}. In \cite{10185607}, the authors analyzed the average AoI performance of MEC systems under local, edge, and partial computing modes, with implications for IIoT. The study in \cite{9665756} focused on minimizing the expected sum AoI in MEC-assisted IIoT networks by offloading resource-intensive tasks to edge servers, providing tractable analytical expressions for AoI. In \cite{10310169}, the authors proposed a three-layer multi-unmanned aerial vehicle (UAV) assisted MEC system, where joint optimization of task offloading, UAV trajectories, and communication resources was developed to minimize system AoI of IIoT under stochastic task arrivals. To enhance security and reliability in IIoT, the authors in \cite{11122470} incorporated blockchain into MEC and utilized AoI to quantify the freshness of status updates across multiple stakeholders. The authors in \cite{11089504} used the AoI metric to evaluate data freshness of the end-edge-cloud computing systems, where a joint optimization of task offloading and resource allocation was proposed to minimize both AoI and energy consumption under constraints of deadlines and capacity constraints. Moreover, to meet real-time demands of computation-intensive tasks, the study in \cite{9900430} leveraged AoI as a key criterion for edge-offloading decisions in IIoT systems, proposing a joint detection and offloading strategy to maximize computation throughput under AoI constraints.

\subsection{Studies on DRL-driven task offloading and resource allocation schemes in MEC Networks}

\par In recent years, artificial intelligence (AI) technologies, particularly deep reinforcement learning (DRL), have been extensively applied in MEC networks, significantly promoting the intelligence and adaptability of IIoT systems \cite{10589619,7900337, bansal2020survey, wu2025physiological, 11197211}. In \cite{huang2023aoi}, the authors investigated an AoI-aware joint energy control and computation offloading problem and developed a DRL-based algorithm to adapt to dynamic IIoT environments. To address task dependencies in IIoT, the authors in \cite{10301673} introduced a directed acyclic graph (DAG)-based computation offloading framework and formulated a joint optimization of delay, energy consumption, and AoI, solved using an enhanced D3QN approach. The authors in \cite{10510274} studied an aerial-ground collaborative MEC system to balance AoI and energy consumption, where a multi-objective proximal policy optimization algorithm was proposed to jointly optimize UAV trajectories and task offloading ratios, supporting IIoT aerial monitoring. Similarly, the authors in \cite{10181012} developed a decentralized multi-agent DRL framework for UAV crowdsensing to minimize both AoI and violation ratios while maximizing collected data under energy constraints. Furthermore, in \cite{10820863}, the authors leveraged a soft actor–critic DRL method to jointly optimize UAV trajectory and network configuration for AoI minimization under realistic cost and operational constraints in IIoT contexts.

\subsection{Limitations of these Existing Research Works}

\par Although considerable research has advanced AoI-aware MEC systems and DRL-based optimization methods, significant limitations remain in scaling these approaches to IIoT networks within multi-BS-MEC real-time monitoring environments. Most existing studies rely on oversimplified queueing models with static assumptions, which fail to capture the stochastic and dynamic characteristics of real-time IIoT traffic. Moreover, current works predominantly address single-BS scenarios, overlooking the complex coupling effects and combinatorial explosion of task offloading and resource allocation in multi-BS systems. Furthermore, state-of-the-art DRL algorithms exhibit poor scalability in high-dimensional action spaces, resulting in exponential computational overhead and unstable convergence. These limitations underscore the urgent need for scalable, AoI-aware optimization frameworks and robust DRL architectures capable of accurately modeling timeliness, enabling intelligent task offloading and resource allocation, and achieving efficient coordination across multi-BS MEC networks under dynamic IIoT workloads.

\section{System Model}

\par AS shown in Fig. \ref{fig_system_model}, we proposed an AoI-aware multi-BS-MEC real-time monitoring system to facilitate IIoT access. Each BS is equipped with an MEC server, referred to as BS-MEC, jointly providing communication and computation resources. The system comprises $N \geq 1$ single-antenna IIoT devices capable of real-time environmental sending and update-packet generation. Unlike conventional delay-based metrics that only capture the perspective of transmitting devices, they fail to reflect the timeliness experienced at the receiving end \cite{yates2018age}. Therefore, they are inadequate for evaluating real-time monitoring systems. To address this, we adopt the concept of AoI as the performance metric, enabling an accurate assessment of information freshness across the multi-BS-MEC architecture \cite{abd2019role,yates2021age}.

\begin{figure}[h]
\centering
\includegraphics[scale=0.45]{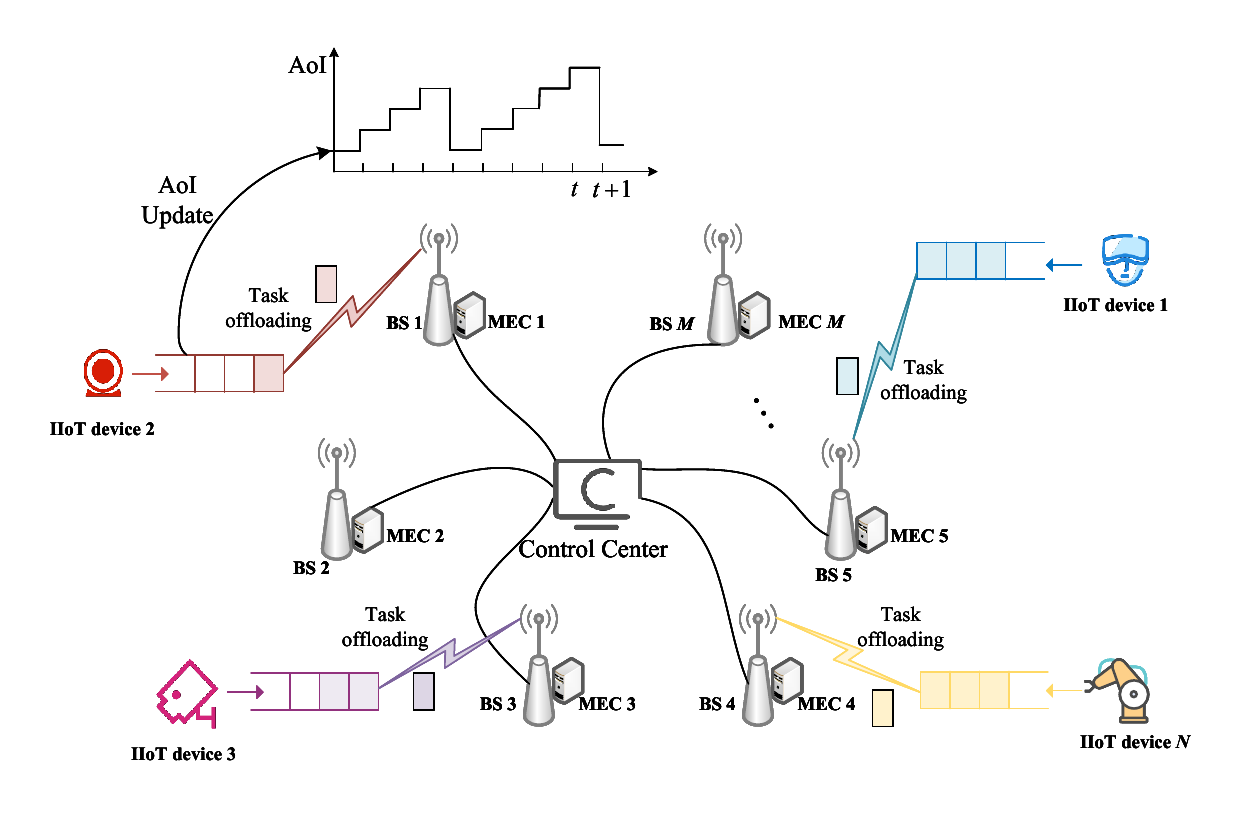}
\caption{\small The proposed AoI-aware multi-BS real-time monitoring system for IIoT access.}
\label{fig_system_model}
\end{figure}

\par Let $\mathcal{M}=\left\{1,2, \cdots, M \right\}$ and $\mathcal{N}=\left \{1,2, \cdots, N\right\}$ denote the index sets of BE-MEC servers and IIoT devices, respectively, with each device $i \in \mathcal{N}$ assigned a priority $\alpha_{i}$. Task arrivals at each IIoT device follow a Bernoulli distribution, and the system operates over $T$ time slots of duration $\tau_{max}$, indexed by $\mathcal{T}=\left \{1,2, \cdots, T \right\}$. We consider that each device adopts a passive sampling strategy, generating status update packets only upon environmental changes. Denote $z_{i}$ and $c_{i}$ as the task size and required CPU cycles for device $i$, respectively. To preserve real-time responsiveness, each device maintains a buffer queue where newly generated tasks are placed at the front. Next, we analyze the task offloading process in the proposed system model. To facilitate tractable analysis and focus on the core problems in multi-BS-MEC systems, orthogonal frequency bands are allocated among different BSs to eliminate inter-cell interference. To further mitigate inter-user interference, we adopt orthogonal frequency division multiple access, allowing each IIoT device to occupy a distinct orthogonal subchannel during transmission. The offloading matrix $\boldsymbol{\mathrm{X}}$ can be given by Eq. (\ref{eq1}), as follows:

\begin{equation}\label{eq1}
    \boldsymbol{\mathrm{X}} = \left[\boldsymbol{a}_{1}, \boldsymbol{a}_{2}, \cdots, \boldsymbol{a}_{N}\right] =
    \begin{bmatrix}
       \tilde{\boldsymbol{a}}_{1}\\
       \tilde{\boldsymbol{a}}_{1}\\
       \cdots\\
       \tilde{\boldsymbol{a}}_{N}
    \end{bmatrix}.
\end{equation}

\par Each IIoT device can offload its task to at most one BS-MEC in each time slot $t$. Let $\tilde{a}_{i}$ denote the offloading vector of device $i$, and $a_{i,j}^{t}$ be the binary decision variable indicating whether device $i \in \mathcal{N}$ offloads to BS $j \in \mathcal{M}$. The offloading succeeds with probability $\varepsilon$ and fails with probability $1-\varepsilon$. The wireless channel comprises small-scale Rayleigh fading $g_{i,j}(t) \sim \mathcal{CN}(0,1)$, so that the power gain $|g_{i,j}(t)|^2$ is exponentially distributed with unit mean, and large-scale path loss $d_{i,j}^{-\tau}$, where $d_{i,j}$ is the distance between device $i$ and BS $j$, and $\tau$ is the path-loss exponent. Assuming additive white Gaussian noise with power $\sigma^2$, the achievable transmission rate is given by Shannon's Capacity, as follows:

\begin{equation}\label{eq2}
    R_{i,j} (t) = B_{i,j}(t) \log_{2}\left(1 + \frac{p_{i}(t) |h_{i,j}(t)|^{2}}{\sigma^{2}}\right),
\end{equation}
where $B_{i,j}(t)$ and $p_{i}(t)$ represent the bandwidth and transmission power allocated to device $i$, respectively, and $h_{i,j}(t) = g_{i,j}(t) d_{i,j}^{-\tau/2}$ indicates the channel coefficient.

\par During task offloading, the total service time at each BS-MEC comprises two parts: (i) the wireless transmission delay over the air interface, and (ii) the computation delay incurred by task processing at the BS-MEC. Consequently, the transmission delay of device $i$ is given by

\begin{equation}\label{eq3}
    t_{i,j}^{trans} = \frac{z_{i}}{R_{i,j} (t)}.
\end{equation}

\par Then, the energy consumption resulted by device $i$ during wireless transmission is denoted as follows:

\begin{equation}\label{eq4}
    E_{i}(t) = \sum\limits_{j = 1}^{M} p_{i}(t) t_{i,j}^{trans}.
\end{equation}

\par Upon receiving the task from device $i$, the computation delay at BS-MEC $j$ is given by

\begin{equation}\label{eq5}
    t_{i,j}^{comp} = \frac{z_{i} c_{i}}{f_{i,j}(t)},
\end{equation}
where $f_{i,j}(t)$ denotes the computation resources allocated by BS-MEC $j$ to device $i$. Accordingly, the total processing delay of device $i$ is expressed as

\begin{equation}\label{eq6}
     D_{i}(t) = \sum_{j=1}^{M} a_{i,j}^{t}\left(t_{i,j}^{trans} + t_{i,j}^{comp}\right).
\end{equation}

\par To evaluate the real-time performance, this paper leverages the concept of AoI, defined as the time elapsed since the last successfully processed status update at the BS-MEC. Although prior studies have explored AoI via queuing models such as M/M/1 and D/M/1 \cite{6195689,8406928}, these works rely on restrictive assumptions about arrival and service distributions. In contrast, task arrivals and service processes in real-time monitoring systems are inherently random and asynchronous, making queuing-based formulations intractable. Unlike these traditional approaches constrained by predefined models, this paper directly employs AoI as a performance metric, providing a more flexible and realistic framework for analyzing and optimizing real-time task transmission, as illustrated in Fig.~\ref{fig_AoI_changes}.

\begin{figure}[htbp]
\centering
\includegraphics[scale=1]{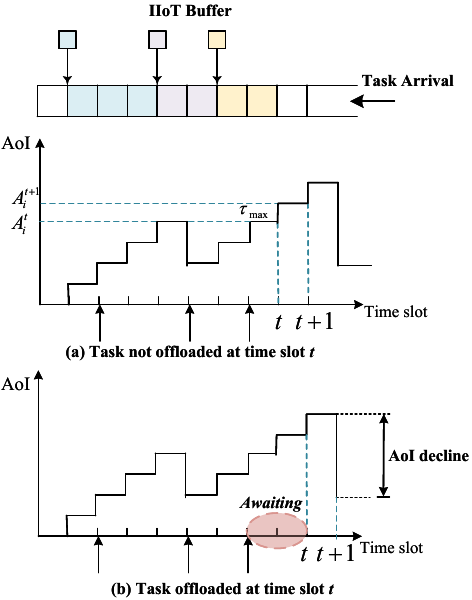} \caption{\small The AoI changes of the proposed multi-BS-MEC real-time monitoring system for IIoT access.}
\label{fig_AoI_changes}
\end{figure}

\par \textbf{\textit{Case 1}} (a), if the task of device $i$ has been successfully offloaded but not yet processed, or if the device remains unoffloaded or experiences transmission failure, the AoI at the next time slot can be denoted as follows:

\begin{equation}\label{eq7}
A_{i}^{t+1} = A_{i}^{t} + \tau_{max},
\end{equation}
where the AoI at slot $t + 1$ increases by one slot length $\tau_{max}$.

\par \textbf{\textit{Case 2}} (b), if the task has been offloaded and successfully completed, the AoI decreases as a fresh update becomes available only after both transmission and computation are finished. In this case, the AoI at slot $t+1$ can be expressed as

\begin{equation}\label{eq8}
     A_{i}^{t + 1} = t - G_{i}^{t} + t_{i,j}^{trans} + t_{i,j}^{comp},
\end{equation}
where $t$ denotes the current time, $G_{i}^{t}$ indicates the task generation time, and $t_{i,j}^{trans}$ and $t_{i,j}^{comp}$ represent the transmission and computation delays, respectively. Accordingly, the AoI evolution for device $i$ under both cases is summarized in (\ref{eq9}), and its dynamic behavior is illustrated in Fig.~\ref{fig_AoI_changes}.

\begin{equation}\label{eq9}
A_{i}^{t+1} = \left\{
\begin{array}{ll}
A_{i}^{t} + 1, & \hbox{Case 1;} \\
t - G_{i}^{t} + t_{i,j}^{trans} + t_{i,j}^{comp}, & \hbox{Case 2.}
\end{array}
\right.
\end{equation}

\section{Problem Formulation}

\par To optimize the timeliness of the proposed AoI-aware multi-BS-MEC real-time monitoring system, we formulate a long-term average AoI minimization problem, subject to strict constraints on delay, bandwidth, computation capacity, energy consumption, and task scheduling. The problem is formulated as follows:

\vspace{-2em}

\begin{subequations}\label{eq10}
\begin{align}
\mathcal{P}1: & \min_{\{B_{i,j}(t), f_{i,j}(t), a_{i,j}^{t}\}_{\forall i \in \mathcal{N}, j \in \mathcal{M}}} \lim_{T \rightarrow \infty} \frac{1}{T} \sum\limits_{i = 1}^{N} \sum\limits_{j = 1}^{M} \sum_{t=1}^{T} \alpha_{i} A_{i}^{t}, \\
\emph{s.t.} \ \ \ & \sum\limits_{j=1}^{M} a_{i,j}^{t} \leq 1, a_{i,j}^{t} \in \{0,1\}, \quad \forall i \in \mathcal{N}, \\
& \sum\limits_{i=1}^{N} a_{i,j}^{t} \leq K, \quad \forall j \in \mathcal{M}, \\ & E_{i}(t) \leq E_{max}, \quad \forall i \in \mathcal{N},\\
& \sum\limits_{i=1}^{N} B_{i,j}(t) \leq B_{j}^{max}, \quad \forall j \in \mathcal{M}, \\
& \sum\limits_{i=1}^{N} f_{i,j}(t) \leq f_{j}^{max}, \quad \forall j \in \mathcal{M}, \\
& D_{i}(t) \leq \tau_{max}, \quad \forall i \in \mathcal{N},
\end{align}
\end{subequations}
constraints (\ref{eq10}b) and (\ref{eq10}c) ensure that each device offloads its tasks to at most one BS-MEC per time slot, while limiting the number of simultaneously served devices. Constraint (\ref{eq10}d) enforces the per-device energy budget, crucial for battery-powered IIoT nodes. Constraints (\ref{eq10}e) and (\ref{eq10}f) bound the total bandwidth and computational resources of each BS-MEC, and (\ref{eq10}g) guarantees timely completion of task transmission and execution within one slot. Notably, problem $\mathcal{P}1$ involves discrete variables $a_{i,j}^{t}$ and continuous variables $B_{i,j}(t)$ and $f_{i,j}(t)$. Their complex coupling renders $\mathcal{P}1$ a non-convex dynamic stochastic optimization problem, for which traditional convex optimization methods are inapplicable.

\section{Algorithm Designs and Solutions}

\par The primary challenges in solving this long-term dynamic stochastic optimization problem lie in the dependence of AoI on both current state and prior states. Moreover, the optimization objective of $\mathcal{P}1$ involves discrete offloading decisions with continuous variables, which significantly exacerbates the solution's computational complexity. To simplify the solving process, we decouple the original problem into two subproblems: task offloading and resource allocation.

\vspace{-0.5em}

\begin{figure}[htbp]
\centering
\includegraphics[scale=0.4]{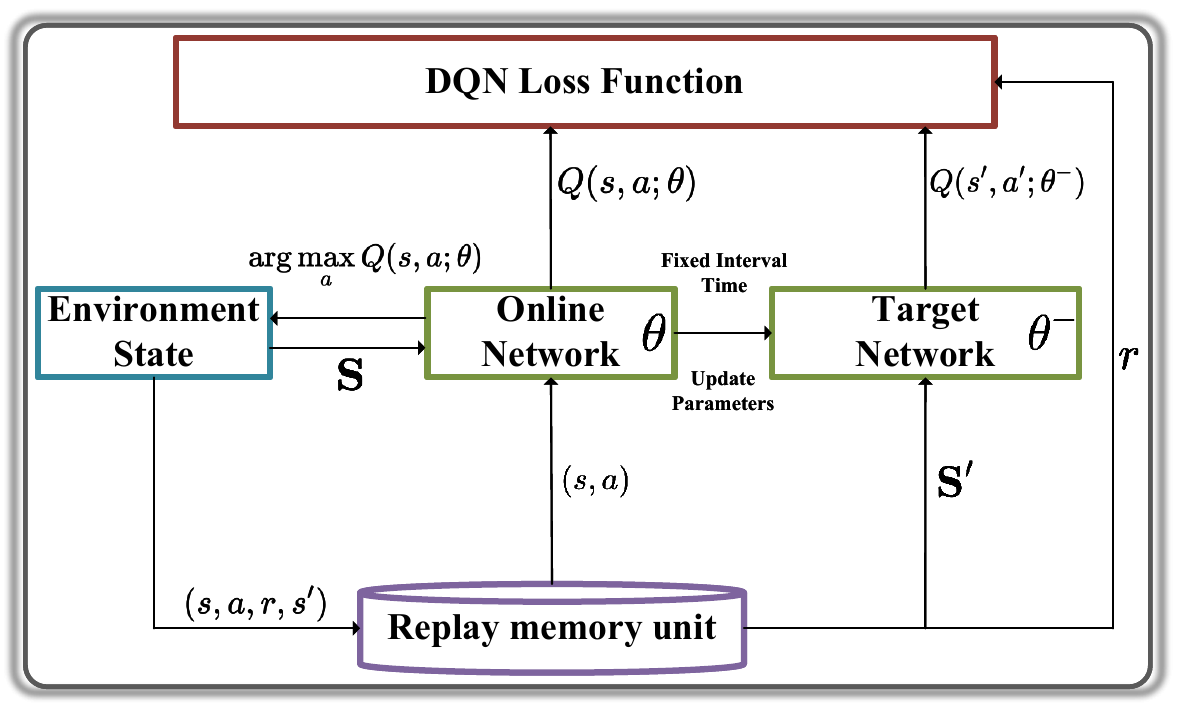}
\caption{The basic framework of DQN framework \cite{mnih2015human}.}
\vspace{-2em}
\label{fig_DQN}
\end{figure}

\subsection{The Solution for the Task Offloading Problem}

\par To effectively tackle the task offloading subproblem, we first reformulate it as an equivalent Constrained Markov Decision Process (CMDP), wherein the current state is influenced by past states and also determines future states. This CMDP is encapsulated by a multidimensional array $\left[\mathbb{A}, \mathbb{S}, \boldsymbol{\mathbf{R}}, \boldsymbol{\mathbf{P}}, \boldsymbol{\mathbf{\pi}}\right]$, with elements defined as follows:

\begin{enumerate}
    \item \underline{\textbf{$\mathbb{A}$: Action Space}} represents interactions between BS-MEC and IIoT devices via offloading variables $a_{i,j}^{t}$. The action space $\mathbb{A}$ is expressed as the offloading matrix $\boldsymbol{\mathrm{X}}$.

    \item \underline{\textbf{$\mathbb{S}$: State Space}} comprises four components:
        \begin{itemize}
            \item The information of AoI for all IIoT devices at time slot $t$ can be denoted as $\boldsymbol{A}(t) = \left[A_{1}^{t}, A_{2}^{t}, \cdots, A_{N}^{t}\right]$.

            \item The information of energy consumption for all devices at time slot $t$ is represented by $\boldsymbol{E}(t) = \left[E_{1}(t), E_{2}(t), \cdots, E_{N}(t)\right]$.

            \item The information of delay for all IIoT devices at time slot $t$ can be expressed as $\boldsymbol{D}(t) = \left[D_{1}(t), D_{2}(t), \cdots, D_{N}(t)\right]$.

            \item The information of channel state information (CSI) for the considered system model can be denoted as $\boldsymbol{H}(t) = \left\{h_{i,j}(t)\right\}_{i \in \mathcal{N}, j \in \mathcal{M}}, t \in \mathcal{T}$.
        \end{itemize}

    \item \underline{\textbf{$\boldsymbol{\mathbf{R}}$: Reward function}} hinges on both the actions and states at time slot $t$. To align the reward with the optimization objective of $\mathcal{P}1$, we define the reward function as the negative weighted sum of AoI:

    \begin{equation}\label{eq11}
         r(t) = - \sum\limits_{i = 1}^{N}\alpha_{i} A_{i}^{t}.
    \end{equation}

    \item \underline{\textbf{$\boldsymbol{\mathbf{P}}$: State transition probability}} quantifies the transition probability of advancing to state $s(t+1)$ from $s(t)$ upon action execution.

    \item \underline{\textbf{$\boldsymbol{\mathbf{\pi}}$: Strategy}} denotes the state evolution over  $\mathcal{T}$, denoted as $\theta = \left\{\pi_{1}, \cdots, \pi_{T}\right\}$, with the optimal strategy:

    \begin{equation}\label{eq12}
        \pi^{\ast} = \arg \min\limits_{\boldsymbol{\mathbf{\pi}}}\frac{1}{T} \mathbb{E}_{\boldsymbol{\mathbf{\pi}}}\left[\sum_{t=1}^{T}c(t) | s(1)\right].
    \end{equation}
\end{enumerate}

\begin{figure*}[t]
\centering
\includegraphics[scale=0.7]{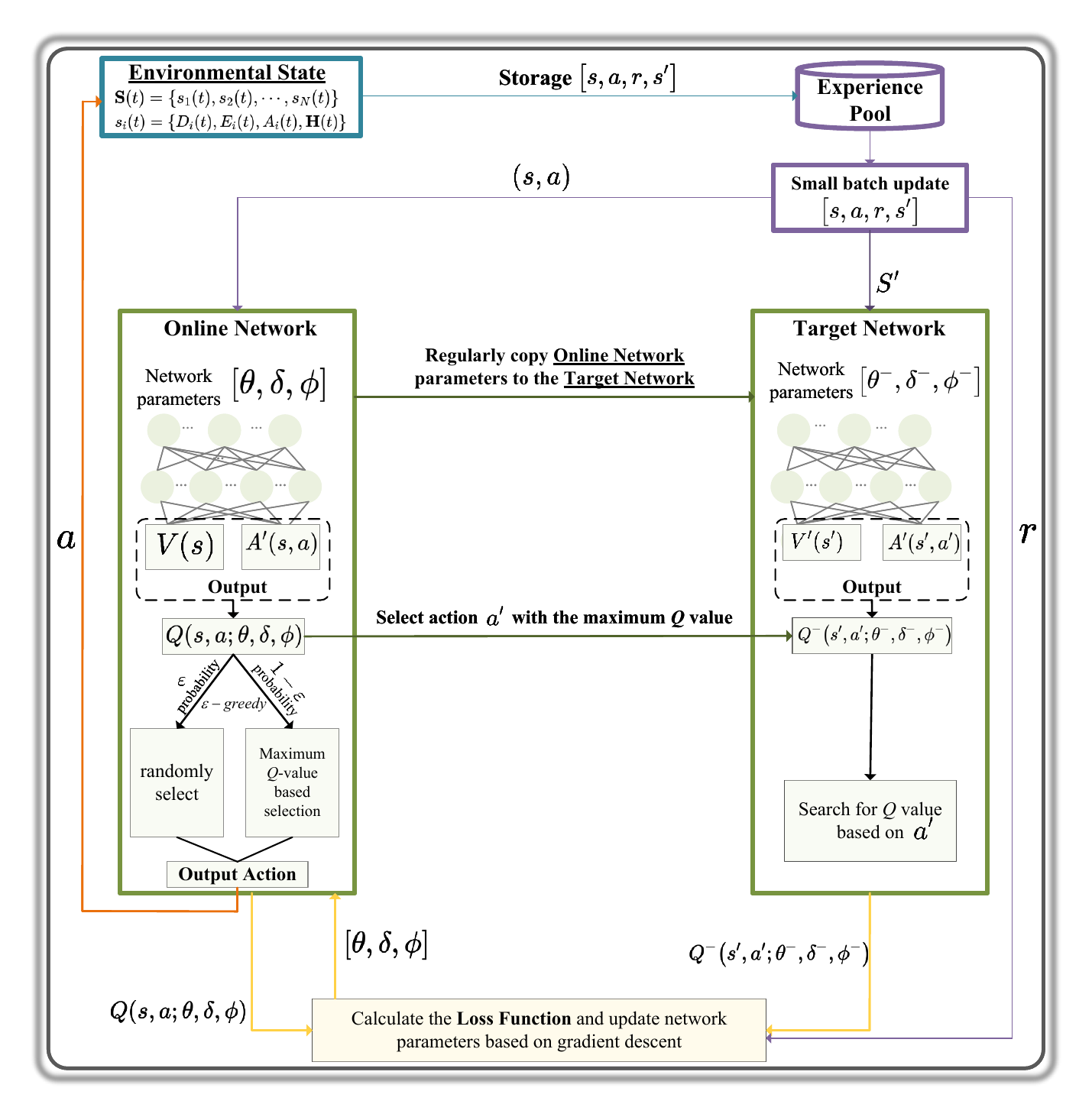}
\caption{The D3QN framework designed for our considered task offloading subproblem.}
\label{fig_D3QN}
\vspace{-2em}
\end{figure*}

\par Owing to the CMDP's inherent intractability, we transform CMDP into an unconstrained MDP by introducing relaxation terms into the reward function (\ref{eq11}), which seamlessly integrates objectives with implicit constraint enforcement, and the refined reward function becomes:

\begin{equation}\label{eq13}
    r(t) = - \sum\limits_{i=1}^{N} \bigg[\alpha_{i} A_{i}^{t} - \zeta \left(E_{max} - E_{i}(t)\right) - \beta \left(\tau_{max} - D_{i}(t)\right)\bigg].
\end{equation}

\par Solving CMDP is equivalent to solving the unconstrained MDP. This equivalence facilitates DRL deployment to optimize long-term utility of the system, quantified via Q-values, also known as the action value function, which estimates the expected returns of taking a specific action and following a certain strategy under a given state. The Bellman equation can be used to characterize the recursive relationship between the action value at time slot $t$ and the expected value of future actions, thus enabling recursive calculation of the Q-value. The specific expression of the Bellman equation can be denoted as

\begin{equation}\label{eq14}
    Q\left(s_{t}, a_{t}\right) = r(t) + \gamma \mathbb{E} \left[Q\left(s_{t+1}, a_{t+1}\right) \right],
\end{equation}
with discount factor $\gamma \in \left[0,1\right]$ balancing immediate versus prospective rewards. Smaller values of $\gamma$ emphasize short-term gains, whereas $\gamma \rightarrow 1$ prioritizes long-term returns. For the discrete binary action space $\mathbb{A}$, Deep Q-Network (DQN) integrates reinforcement learning with deep neural networks to approximate Q-tables, adeptly navigating high-dimensional states \cite{mnih2015human}. In this setting, each device acts as an agent, and the framework of DQN is depicted in Fig.~\ref{fig_DQN}.

\par As illustrated in Fig.~\ref{fig_DQN}, the DQN framework consists of two key components: the online network with parameter $\theta$ and the target network with parameter $\theta^{-}$. The online network estimates Q-values based on the current state $\mathbf{S}$, with its parameters updated iteratively from samples $(s,a)$ in the pool of replay memory unit. The target network, updated periodically by synchronizing with the online network, stabilizes the training process by mitigating target value fluctuations. This decouple network structure significantly improves the instability caused by target movement during the learning process, enhancing the learning and training stability \cite{mnih2015human}. The DQN loss function is formulated as the mean squared error (MSE) between the predicted Q-value $Q(s,a;\theta)$ and the target Q-value $Q(s^{\prime},a^{\prime};\theta^{-})$ \cite{10380323}:

\begin{equation}\label{eq15}
    L\left(\theta\right) = \mathbb{E}\bigg[\bigg(r(t) + \gamma \max_{a^{\prime}} Q(s^{\prime}, a^{\prime}; \theta) - Q(s, a; \theta) \bigg)^{2} \bigg].
\end{equation}

\par Although DQN achieves superior reinforcement learning, it incurs overestimation and maximization bias since the same network is used for action selection and evaluation, leading to overestimation of the maximum Q value in certain states. Double DQN (DDQN) decouples these two processes: (i) the online network selects the action with the highest Q-value, while (ii) the target network evaluates it, leveraging its stability to reduce bias \cite{van2016deep}. Then the loss function of the DDQN network is given by

\begin{equation}\label{eq16}
    \begin{aligned}
       L\left(\theta\right) = \mathbb{E}\bigg[\bigg(r(t) + \gamma Q^{-1}\left(s^{\prime}, \arg \max_{a^{\prime}} Q(s^{\prime}, a^{\prime}; \theta); \theta\right) & \\
        - Q(s, a; \theta) \bigg)^{2} \bigg]& .
    \end{aligned}
\end{equation}

\par DDQN alleviates overestimation issues, it struggles with action discrimination in noisy or reward-similar scenarios, and its reliance on a single-state representation also fails to capture factors influencing decisions. In this case, Dueling DQN innovates by decomposing the Q-function into a state value function $V(s)$ and an advantage function $A^{\prime}(s, a)$, rewritten as $Q\left(s,a\right) = V(s) + A^{\prime}(s, a)$, with two parallel output structures parameterized by $(\phi, \delta)$ \cite{sewak2019deep, wang2016dueling}. However, directly summing $V$ and $A^{\prime}$ obscures their individual contributions, reducing network performance. To address this, a modified Q-function with a penalty term is introduced \cite{wang2016dueling}. Denoting the set of devices offloading tasks at time slot $t$ as $B(t)$, the modified Q-function is reformulated as follows:

\begin{equation}\label{eq17}
     \begin{aligned}
         Q\left(s,a; \theta, \delta, \phi\right) & = V(s; \delta, \lambda) + \bigg[A^{\prime} (s,a; \theta, \phi) - \\
         & \frac{1}{\left|B(t)\right|} \sum\limits_{a^{\ast} \in B(t)} A^{\prime} \left(s, a^{\ast}; \theta, \phi\right)\bigg],
     \end{aligned}
\end{equation}
where $V$ and $A^{\prime}$ are relatively unchanged with different inputs when Q is determined, improving the recognizability issues.

\par Synergizing the refined estimation of Dueling DQN with the bias reduction of DDQN, the Dueling Double DQN (D3QN) algorithm is proposed \cite{10155465, zabihi2023reinforcement, 10660558}. D3QN alleviates overestimation while more accurately estimating the state-action values, making it well-suited for the task offloading problem considered here and theoretically closer to the optimal solution. Its loss function follows the DDQN form, while Q-value calculations adopt the Dueling architecture, as shown in (\ref{eq18}), where target network parameters $\left(\theta^-, \delta^-, \phi^-\right)$ update periodically by copying the corresponding online network parameters $\left(\theta, \delta, \phi\right)$. However, D3QN faces scalability hurdles in expansive action spaces, which grow exponentially with IIoT devices and BSs, significantly slowing learning and convergence. The D3QN architecture tailored for the task offloading problem is illustrated in Fig. \ref{fig_D3QN}.

\vspace{-0.5em}

\begin{figure}[h]
\centering
\includegraphics[scale=0.65]{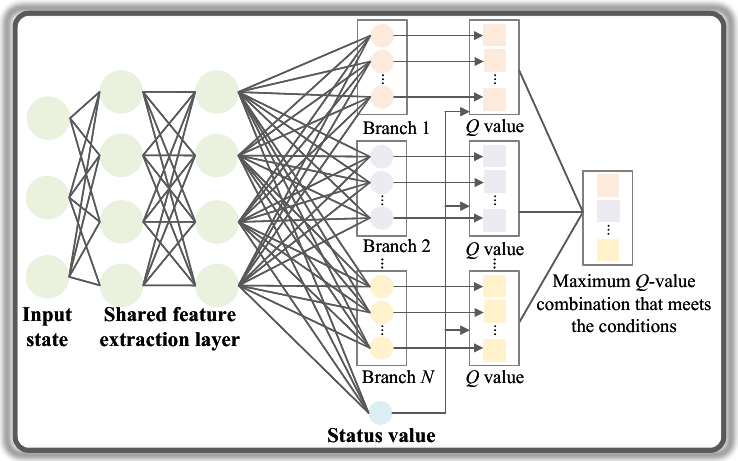}
\caption{The framework of the Branching-D3QN algorithm.}
\label{fig_branching_D3QN}
\vspace{-1em}
\end{figure}

\begin{figure*}[t]
\centering
\hrulefill
\begin{equation}\label{eq18}
  \begin{aligned}
    L\left(\theta, \delta, \phi\right) = \mathbb{E} \bigg[\bigg(r_{t} + \gamma Q^{-} \big(s^{\prime}, \arg \max_{a^{\prime}} Q^{-}(s^{\prime}, a^{\prime}; \theta, \delta, \phi); \theta^{-1}, \delta^{-1}, \phi^{-1} \big) - Q(s, a; \theta, \delta, \phi)\bigg)^{2} \bigg].
  \end{aligned}
\end{equation}
\hrulefill
\begin{equation}\label{eq19}
     \begin{aligned}
          L\left(\theta, \delta, \phi\right) & = \mathbb{E} \bigg[\frac{1}{N} \sum\limits_{n=1}^{N}  \bigg(r_{t} + \gamma \max_{n} Q^{-}_{n} \big(s_{t+1}, \arg \max_{a^{\prime}_{n} \in \mathcal{A}_{n}} Q_{n}(s_{t+1}, a^{\prime}_{n}; \theta, \delta, \phi); \theta^{-1}, \delta^{-1}, \phi^{-1} \big) - Q_{n}(s, a_{n}; \theta, \delta, \phi)\bigg)^{2} \bigg].
     \end{aligned}
\end{equation}
\hrulefill
\end{figure*}

\par To address the issue of exponential growth in the action space dimensionality, this paper proposes an innovative enhancement for the D3QN framework by incorporating a branching structure, yielding the novel BD3QN algorithm. As illustrated in Fig.~\ref{fig_branching_D3QN}, BD3QN can significantly mitigate the complexity of the decision-making process by decomposing the high-dimensional action space into smaller and more manageable subspaces. The core innovation of BD3QN lies in its ability to conceptualize different features (or dimensions) of each action as independent branches, wherein each branch optimizes decisions specific to its corresponding feature. This design not only reduces the number of output variables per neural network but also enhances the learning efficiency by parallelizing the optimization across multiple subspaces. Therefore, the proposed BD3QN algorithm effectively addresses the curse of dimensionality, fostering scalable and faster convergence for the proposed AoI-aware multi-BS-MEC real-time monitoring systems with massive IIoT access.  In particular, when deploying the BD3QN algorithm for task offloading, the offloading decision of each device is regarded as an independent branch, yielding $N$ branches in aggregate. Each branch has $M+1$ feasible actions, corresponding to the selection of a specific BS-MEC for offloading or local execution. The branching structure empowers our proposed BD3QN algorithm to concurrently optimize the offloading decisions of each device in parallel, effectively addressing the issues of expanding action space caused by escalating numbers of IIoT devices and BSs. As a result, the time complexity of the solution becomes $N(M+1)+1$. Moreover, the detailed branch structure can be illustrated in Fig. \ref{fig_branching_D3QN}. After integrating this branch structure, the Q-value for each sub-branch (e.g., branch $n$) can be denoted as

\vspace{-2em}

\begin{equation}\label{eq20}
   \begin{aligned}
       Q\left(s,a_{n}; \theta, \delta, \phi\right) & = V(s; \theta, \delta) + \bigg[A^{\prime} (s,a_{n}; \theta, \phi) - \\
         & \frac{1}{M+1} \sum\limits_{a^{\ast}_{n} \in \mathcal{A}_{n}} A^{\prime} \left(s, a^{\ast}_{n}; \theta, \phi\right)\bigg].
   \end{aligned}
\end{equation}

\par In this case, the loss value of BD3QN can be reformulated as (\ref{eq19}), and the specific algorithm process can be detailed in \textbf{Algorithm 1}.

\vspace{0.5em}

\begin{algorithm}[htbp]
\small
 \setstretch{0.9}
        \caption{The Branching-D3QN (BD3QN)-based Task Offloading Algorithm.}
        \KwIn{IIoT device set $\mathcal{N}$, BS set $\mathcal{M}$, channel status of devices $h_{n}$, delay threshold $\tau_{max}$, energy constraint of device $E_{max}$, AoI weights of devices $\alpha_{i}$, parameter update cycle $\boldsymbol{\Gamma}$;}
        \textbf{Initialize:} Environment status $\boldsymbol{S}(t)$, parameters of online network $\left[\theta, \delta, \phi \right]$, and target network $\left[\theta^{-}, \delta^{-}, \phi^{-} \right]$, $\varepsilon$, and $\varepsilon_{min}$;\\
        \For{$t = 0, 1, \cdots, T$}{
           Generate a random number $\kappa \in (0,1)$;\\
           \eIf{$\kappa \geq \varepsilon$}
             {
               Select the action combination with the maximum Q value;\\
               \tcp{\!\!\!\!\textbf{Calculate Q value and select the optimal action combination.}}
             }
             {
               Select actions randomly;
             }
             \tcp{\!\!\!\!\textbf{Update the AoI of each IIoT device.}}
               Update AoI for devices with action changes according to (\ref{eq9})-Case 2;\\
               Update AoI for other devices without action changes based on (\ref{eq9})-Case 1;\\
             \If{\textnormal{Devices with unchanged actions AoI} $\geq$ \textnormal{AoI}$_{max, i}$}
             {
               Update the AoI of this device to \textnormal{AoI}$_{max, i}$;\\
               Increase the selection probability of this device;\\
             }
               Execute action $a(t)$, observe reward function $r(t)$ and next state $s(t+1)$;\\
               Store the combination $\left[s(t), r(t), s(t+1)\right]$ into the experience pool;\\
               Randomly sample a batch from the experience pool for training;\\
               Calculate the Loss function based on Eq. (\ref{eq19});\\
               \tcp{\!\!\!\!\textbf{Update the online network parameters.}}
               Update online network parameters $\left[\theta, \sigma, \phi\right]$ via using gradient descent method;\\
               \tcp{\!\!\!\!\textbf{Update the target network parameters.}}
              \If{\textnormal{Training Steps} $\% \Gamma == 0$}
             {
               Copy the online network parameters to the target network;\\
               Update target network parameters: $\theta^{-} = \theta$, $\sigma^{-} = \sigma$, and $\phi^{-} = \phi$;\\
             }
          }
       \KwOut{Optimal task offloading strategy $\boldsymbol{\pi}^{\ast}$.}
\end{algorithm}

\vspace{-0.5em}

\subsection{The Solution for Resource Allocation Problem}

\par Based on the task offloading strategies $a_{i,j}^{t}$ obtained by \textbf{Algorithm 1}, the original problem $\mathcal{P}1$ can be transformed into $\mathcal{P2}$, as follows:

\vspace{-1em}

\begin{subequations}\label{eq21}
    \begin{align}
              \mathcal{P}2: & \min_{\{B_{i,j}(t), f_{i,j}(t)\}_{\forall i \in \mathcal{N}, j \in \mathcal{M}}} \lim_{T \rightarrow \infty} \frac{1}{T} \sum\limits_{i = 1}^{N} \sum\limits_{j = 1}^{M} \sum_{t=1}^{T} \alpha_{i} A_{i}^{t}, \\
                \emph{s.t.} \ \ \ & E_{i}(t) \leq E_{max}, \quad \forall i \in \mathcal{N},\\
                                  & \sum\limits_{i=1}^{N} B_{i,j}(t) \leq B_{j}^{max}, \quad \forall j \in \mathcal{M}, \\
                                  & \sum\limits_{i=1}^{N} f_{i,j}(t) \leq f_{j}^{max}, \quad \forall j \in \mathcal{M}, \\
                                  & D_{i}(t) \leq \tau_{max}, \quad \forall i \in \mathcal{N}.
    \end{align}
\end{subequations}

\par Decomposing $\mathcal{P}2$ into a resource allocation decision-making process with granularity for each given time slot $t \in \mathcal{T}$:

\vspace{-1em}

\begin{subequations}\label{eq22}
    \begin{align}
              \mathcal{P}3: & \min_{\{B_{i,j}(t), f_{i,j}(t)\}_{\forall i \in \mathcal{N}, j \in \mathcal{M}}} \sum\limits_{i = 1}^{N} \sum\limits_{j = 1}^{M} \alpha_{i} A_{i}^{t}, \\
                \emph{s.t.} \ \ \ & (\ref{eq21}b), (\ref{eq21}c), (\ref{eq21}d), \rm{and}\  (\ref{eq21}e).
    \end{align}
\end{subequations}

\par Expanding the objective function (\ref{eq22}a), and $\mathcal{P}3$ is equivalently transformed into $\mathcal{P}4$, as follows:

\begin{subequations}\label{eq23}
    \begin{align}
              \mathcal{P}4: & \min_{\{B_{i,j}(t), f_{i,j}(t)\}_{\forall i \in \mathcal{N}, j \in \mathcal{M}}} \sum\limits_{i = 1}^{N} \alpha_{i} \bigg\{ \sum\limits_{j = 1}^{M} a_{i,j}^{t} \bigg(t - G_{i}^{t} + \nonumber \\
                            & \quad \quad \quad \frac{d_{i}}{B_{i,j}(t) \log_{2} \big(1 + \frac{p_{i}(t)|h_{i,j}(t)|^{2}}{\sigma^{2}}\big)} + \frac{d_{i} c_{i}}{f_{i,j}(t)} \bigg)\bigg\}, \\
                \emph{s.t.} \ \ \ & (\ref{eq21}b), (\ref{eq21}c), (\ref{eq21}d), \rm{and}\  (\ref{eq21}e) \nonumber.
    \end{align}
\end{subequations}

\par Next, we demonstrate through Theorem \ref{theo1} that $\mathcal{P}4$ is a strictly convex optimization problem, as follows:

\begin{myTheo}\label{theo1}
    The objective function of $\mathcal{P}4$ is convex regarding the resource allocation variables ${B_{i,j}(t), f_{i,j}(t)}_{\forall i \in \mathcal{N}, j \in \mathcal{M}}$, and its all constraints are convex. Thus, $\mathcal{P}4$ is a strictly convex optimization problem.
\end{myTheo}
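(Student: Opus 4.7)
The plan is to establish convexity of $\mathcal{P}4$ by verifying convexity of the objective and convexity of each constraint set separately, and then invoking the standard fact that minimizing a convex function over a convex set is a convex program.

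First I would stack the decision variables into a single vector $\mathbf{x} = (B_{1,1}(t),\dots,B_{N,M}(t),f_{1,1}(t),\dots,f_{N,M}(t))^{\top} \in \mathbb{R}^{2NM}_{++}$ and observe that, up to the additive constant $\sum_i \alpha_i \sum_j a_{i,j}^{t}(t - G_i^{t})$, the objective in (\ref{eq23}a) is a nonnegative-coefficient sum of univariate reciprocal terms: each $1/B_{i,j}(t)$ carries the positive weight $\alpha_i a_{i,j}^{t} d_i / \log_2(1+p_i(t)|h_{i,j}(t)|^2/\sigma^2)$, and each $1/f_{i,j}(t)$ carries the positive weight $\alpha_i a_{i,j}^{t} d_i c_i$. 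Since no summand couples two different resource variables, the Hessian is diagonal. I would then compute the diagonal entries, which are proportional to $2/B_{i,j}(t)^3$ and $2/f_{i,j}(t)^3$ on the positive orthant and are therefore nonnegative, yielding positive semi-definiteness (in fact strict positive definiteness on the coordinates with $a_{i,j}^{t}=1$). Hence the objective is convex in $\{B_{i,j}(t),f_{i,j}(t)\}$.

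Next I would dispatch the constraints. The budget constraints (\ref{eq21}c)--(\ref{eq21}d) are linear in the decision variables, hence convex. Substituting (\ref{eq3})--(\ref{eq4}) into the energy constraint (\ref{eq21}b) gives an LHS of the form $\sum_j p_i(t)z_i/[B_{i,j}(t)\log_2(1+p_i(t)|h_{i,j}(t)|^2/\sigma^2)]$, which is a positive combination of $1/B_{i,j}(t)$ terms whose second derivatives are strictly positive; its sublevel set is therefore convex. The delay constraint (\ref{eq21}e), through (\ref{eq6}), decomposes analogously into $1/B_{i,j}(t)$ and $1/f_{i,j}(t)$ reciprocal pieces, each convex on the positive orthant, so the corresponding sublevel set is also convex. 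Combining these pieces with the convex objective established above proves Theorem~\ref{theo1}.

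I expect the main obstacle to be presentational rather than analytical: one must carefully justify the diagonal structure of the $2NM\times 2NM$ Hessian by checking that every cross partial $\partial^2/\partial B_{i,j}\partial B_{i',j'}$, $\partial^2/\partial f_{i,j}\partial f_{i',j'}$ (for $(i,j)\ne(i',j')$), and $\partial^2/\partial B_{i,j}\partial f_{i',j'}$ vanishes, so that positive semi-definiteness of the Hessian reduces to checking the sign of scalar second derivatives one coordinate at a time. A secondary care point is to confine the analysis to the physically meaningful domain $B_{i,j}(t)>0$, $f_{i,j}(t)>0$ whenever $a_{i,j}^{t}=1$, so that the reciprocal functions and their derivatives are well defined and the claimed (semi-)definiteness inequalities are strict where it matters.
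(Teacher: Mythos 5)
Your proposal is correct and follows essentially the same route as the paper: exploit the separability of the objective in $B_{i,j}(t)$ and $f_{i,j}(t)$ to obtain a diagonal Hessian whose entries, proportional to $2/B_{i,j}(t)^3$ and $2/f_{i,j}(t)^3$ on the positive orthant, are positive, hence the objective is convex. Your additional verification that the energy and delay constraints reduce to sublevel sets of positive combinations of reciprocals (and that the budget constraints are linear) is in fact more complete than the paper's proof, which asserts convexity of the constraints without checking them.
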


\begin{proof}
  The objective function (\ref{eq23}) of $\mathcal{P}4$ is denoted as $F$. To demonstrate its convexity, we derive the first and second derivatives of $F$ with respect to the variables $B_{i,j}(t)$ and $f_{i,j}(t)$. Firstly, for $f_{i,j}(t)$:

  \begin{equation}\label{eq24}
    \nabla F_{\{f_{i,j}(t), \forall i \in \{i | a_{i,j}^{t} = 1\}\}} = -\frac{\alpha_{i} d_{i} c_{i}}{(f_{i,j}(t))^2} < 0,
  \end{equation}

  \begin{equation}\label{eq25}
     \nabla^2 F_{\{f_{i,j}(t), \forall i \in \{i | a_{i,j}^{t} = 1\}\}} = \frac{2 \alpha_{i} d_{i} c_{i}}{(f_{i,j}(t))^3} > 0,
   \end{equation}

   Next, for $B_{i,j}(t)$:

   \begin{equation}\label{eq26}
     \nabla F_{\{B_{i,j}(t), \forall i \in \{i | a_{i,j}^{t} = 1\}\}} \!=\! \frac{- \alpha_{i} d_{i}}{(B_{i,j}(t))^2 \log_2 \!\left( 1 + \frac{p_i(t) h_{i,j}(t)}{\sigma^2} \right)} \!<\! 0,
    \end{equation}

    \begin{equation}\label{eq27}
      \nabla^2 F_{\{B_{i,j}(t), \forall i \in \{i | a_{i,j}^{t} = 1\}\}} \!=\! \frac{2 \alpha_{i} d_{i}}{(B_{i,j}(t))^3 \log_2 \!\left( 1 + \frac{p_i(t) h_{i,j}(t)}{\sigma^2} \right)} \!>\! 0.
    \end{equation}

    Since $F$ is separable in terms of $B_{i,j}(t)$ and $f_{i,j}(t)$, the mixed partial derivatives are zero. Hence, the Hessian matrix of $F$ is given by:

    \begin{equation}\label{eq28}
        \boldsymbol{\mathrm{H}}(\boldsymbol{f}, \boldsymbol{B}) =
          \begin{bmatrix}
            \frac{2 \alpha_{i} d_{i} c_{i}}{(f_{i,j}(t))^3} & 0 \\
            0 & \frac{2 \alpha_{i} d_{i}}{(B_{i,j}(t))^3 \log_2 \left( 1 + \frac{p_i(t) h_{i,j}(t)}{\sigma^2} \right)}
          \end{bmatrix}
    \end{equation}

    Given that the diagonal elements of the Hessian matrix are all positive for $\boldsymbol{f} > 0$ and $\boldsymbol{B} > 0$, this implies that the matrix is strictly diagonally dominant. Therefore, the Hessian matrix is semi-positive definite, which confirms that the objective function $F$ is convex over the domain $\boldsymbol{f} > 0$ and $\boldsymbol{B} > 0$. Consequently, $\mathcal{P}4$ is a strictly convex optimization problem for the variables $\boldsymbol{f}$ and $\boldsymbol{B}$.
\end{proof}

\par According to \textbf{Theorem \ref{theo1}}, the problem $\mathcal{P}3$ can be easily tackled using the CVX tool.

\subsection{Joint Task Offloading and Resource Allocation Scheme}

\par In Sections V-A and V-B, the original problem was decoupled into two subproblems: task offloading and resource allocation, each equipped with a tailored optimization strategy. To efficiently address their coupling, we adopt an alternating optimization (AO) framework, where the proposed BD3QN-based offloading algorithm and the resource allocation scheme are executed iteratively until convergence. The detailed procedure is summarized in \textbf{Algorithm~2}.

\begin{algorithm}[htbp]
\small
\setstretch{1.0}
\caption{AO-Based Joint Task Offloading and Resource Allocation Scheme}
\KwIn{
    IIoT device set $\mathcal{N}$, BS set $\mathcal{M}$, device channel states $h_n$, delay threshold $\tau_{\text{max}}$, energy limit $E_{\text{max}}$, AoI weights $\alpha_i$, parameter update cycle $\boldsymbol{\Gamma}$.
}
\KwOut{
    Optimized strategies $\left\{B_{i,j}^{\ast}(t), f_{i,j}^{\ast}(t), a_{i,j}^{\ast}(t)\right\}$ for all $i \in \mathcal{N}, j \in \mathcal{M}, t \in \mathcal{T}$.
}
\textbf{Initialize:}
\begin{itemize}
    \item Environment state $\boldsymbol{S}(t)$;
    \item Online network parameters $\left[\theta, \delta, \phi\right]$ and target network parameters $\left[\theta^{-}, \delta^{-}, \phi^{-}\right]$;
    \item Initial feasible resource allocation and offloading scheme $\{B_{i,j}(t), f_{i,j}(t), a_{i,j}(t)\}$ satisfying constraints $(\ref{eq10}b)$-$(\ref{eq10}g)$.
\end{itemize}
\Repeat{convergence or maximum iterations reached}{
    Execute \textbf{Algorithm 1} to update task offloading strategy $\{a_{i,j}(t)\}$;\\

    Solve subproblem $\mathcal{P}4$ via convex optimization (e.g., CVX) to update $\{B_{i,j}(t), f_{i,j}(t)\}$;\\

    Update the joint strategy $\{B_{i,j}(t), f_{i,j}(t), a_{i,j}(t)\}$ for the next iteration;\\
}
\end{algorithm}

\section{Performance Evaluation}

\par In this section, we conduct extensive simulation experiments to rigorously evaluate the effectiveness of the proposed schemes. Specifically, we benchmark the proposed BD3QN algorithm against both classical baselines and state-of-the-art algorithms, and provide detailed comparative analyses. Furthermore, comprehensive ablation studies are performed to verify the contribution of each key component and to demonstrate the overall superiority of our proposed schemes.

\subsection{Experimental Parameter Settings}

\par In our simulations, the IIoT devices and BSs are randomly distributed within a $1000~m \times 1000~m$ rectangular region. All experiments are executed on servers equipped with Nvidia RTX 3090Ti GPU to guarantee efficient training and evaluation. The simulation experiment is implemented in Python 3.9, while the training process leverages TensorFlow 2.0. The detailed simulation parameters used throughout our performance evaluation are summarized in Table \ref{table_I}.

\begin{table}[h]
 \renewcommand{\arraystretch}{1.0}
 \caption{\small Simulation Parameter Settings}
 \vspace{-0.5em}
 \label{table_I}
 \centering
 \resizebox{0.95\columnwidth}{!}
 {
 \begin{tabular}{|l|c|}
  \hline
  \bfseries Parameter                                & \bfseries Value\\
  \hline
  System bandwidth $B$                               & $400$ kHz\\
  \hline
  Limitation of each BS's device access number $K$   & $3$\\
  \hline
  Large-scale fading model $d_{i,j}^{-\tau}$         & $128.1 + 37.6 \log(d_{i,j}[\rm{km}])$\\
  \hline
  Noise power spectral density $N_{0}$               & $-174 \ \rm{dBm/Hz}$\\
  \hline
  Task package size $z_{i}$                          & $[1,3]$ Mbits\\
  \hline
  Maximum delay constraint $\tau_{max}$              & $1$ s\\
  \hline
  MEC server computing capability $f_{j}^{max}$      & $[7,10]$ Gbps\\
  \hline
  Maximum power limit of IIoT devices $p_{i}^{\max}$ & $27.8 \ \rm{dBm/Hz}$\\
  \hline
  Maximum energy limit of IIoT devices $E_{max}$     & $[0.5,1.5]$ J\\
  \hline
  Update steps $\Gamma$                              & $100$\\
  \hline
  Pool size                                          & $50000$\\
  \hline
  Batch size $|\mathcal{B}|$                         & $64$\\
  \hline
  Learning rate                                      & $64$\\
  \hline
  Decay rate of learning rate                        & $0.95$\\
  \hline
  Decay steps of learning rate                       & $10000$\\
  \hline
  Activation function                                & $\rm ReLU$\\
  \hline
  Optimizer                                          & $\rm Adam$ \\
  \hline
  Episode                                            & $1000000$ \\
  \hline
  Discount-related coefficients $\varsigma, \beta$   & [0.8, 0.95]\\
  \hline
 \end{tabular}
 }
\end{table}

\begin{figure*}[t]
    \centering
    \begin{minipage}[b]{0.47\textwidth}
        \centering
        \begin{subfigure}{\linewidth}
            \centering
            \includegraphics[width=\linewidth]{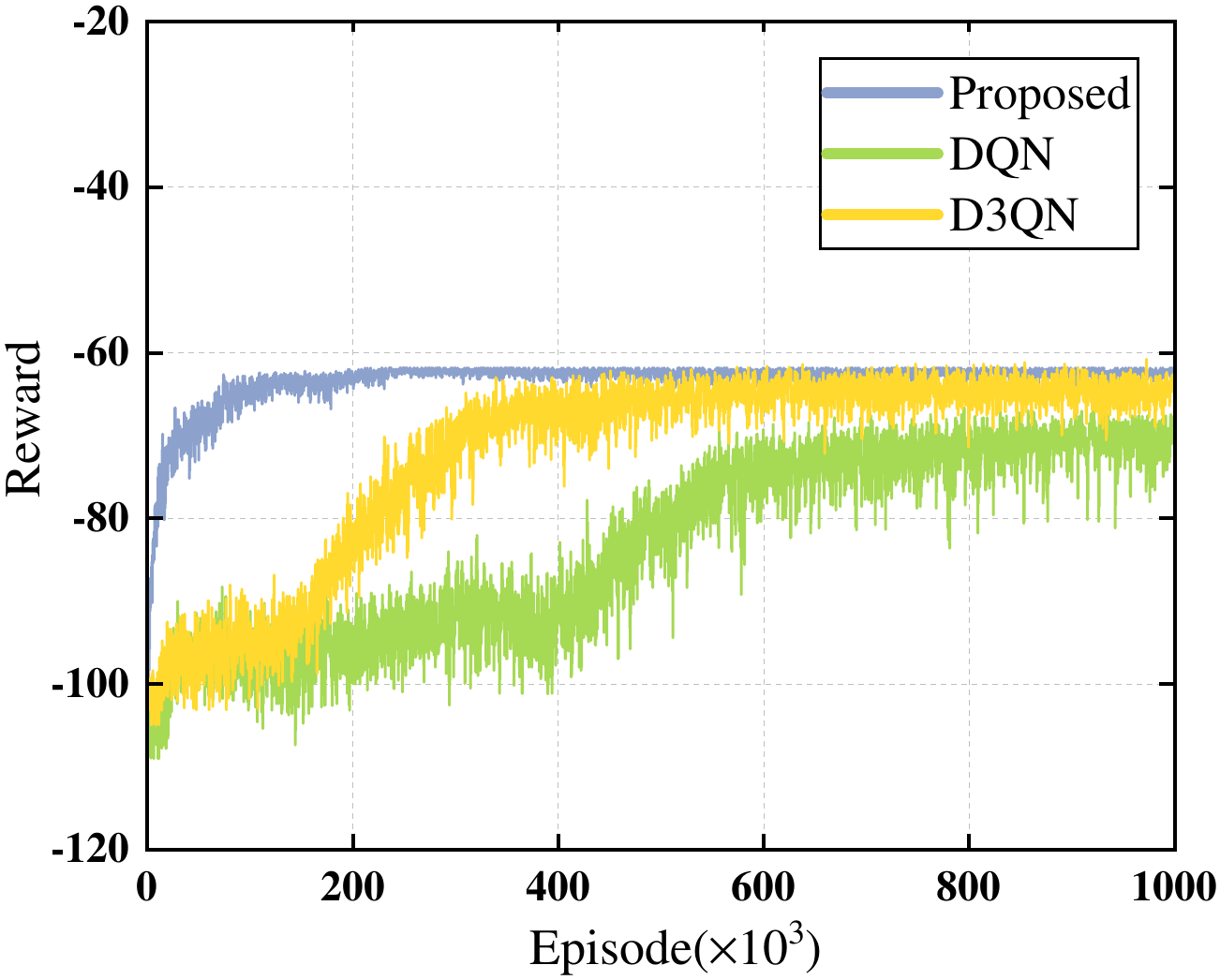}
            \caption{Reward changes over episodes.}
        \end{subfigure}
    \end{minipage}
    \hspace{0.005\textwidth}
    \begin{minipage}[b]{0.47\textwidth}
        \centering
        \begin{subfigure}{\linewidth}
            \centering
            \includegraphics[width=\linewidth]{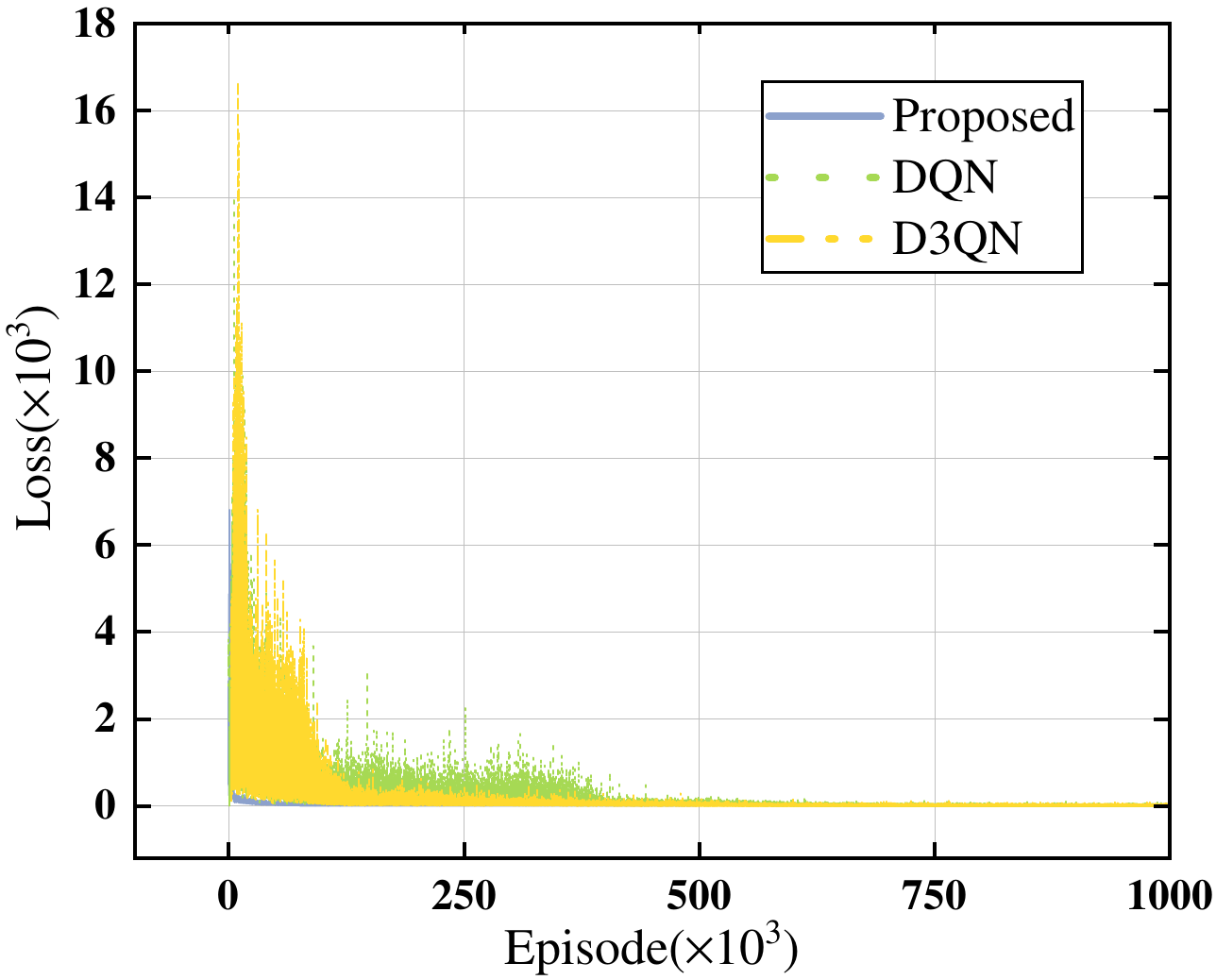}
            \vspace{-1.5em}
            \caption{Loss changes over episodes.}
        \end{subfigure}
    \end{minipage}
    \caption{\small Performance Comparison of the proposed BD3QN Algorithm with DQN and D3QN. (a) Reward changes over episodes. (b) Loss changes over episodes.}
    \label{fig_convergence}
\end{figure*}

\begin{figure*}[t]
    \centering
    \begin{minipage}[b]{0.47\textwidth}
        \centering
        \begin{subfigure}{\linewidth}
            \centering
            \includegraphics[width=\linewidth]{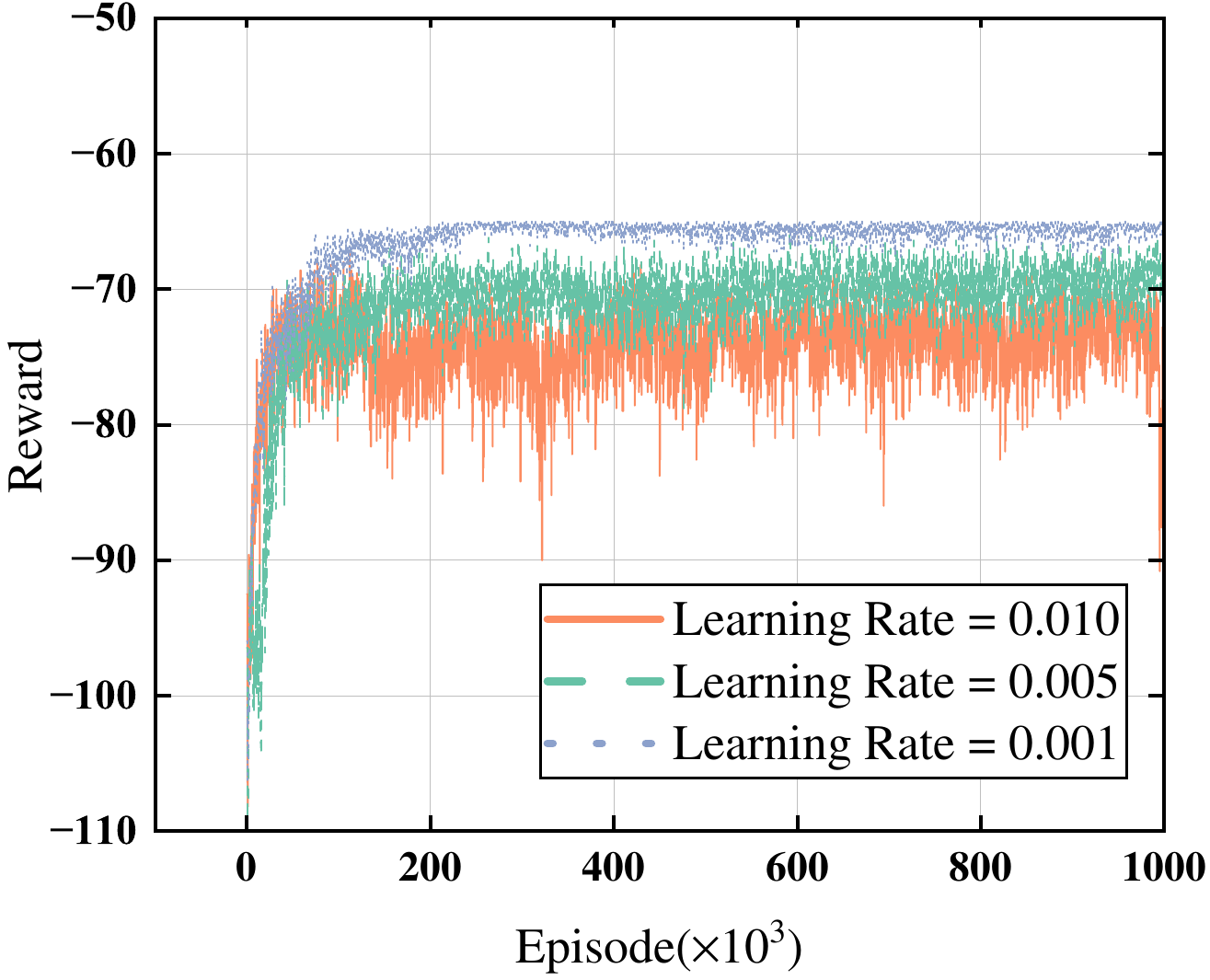}
            \caption{Reward changes over episodes.}
        \end{subfigure}
    \end{minipage}
    \hspace{0.005\textwidth}
    \begin{minipage}[b]{0.47\textwidth}
        \centering
        \begin{subfigure}{\linewidth}
            \centering
            \includegraphics[width=\linewidth]{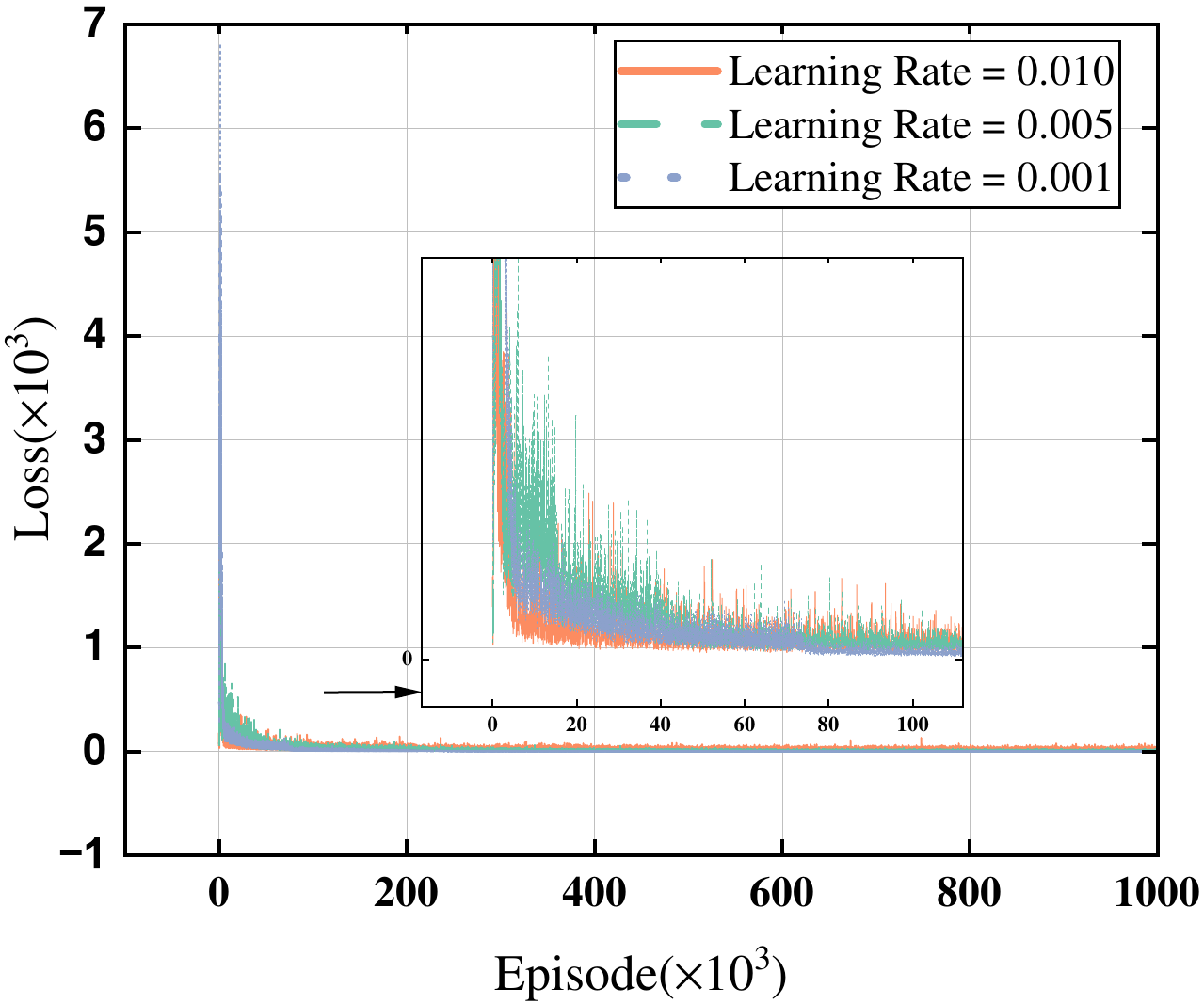}
            \vspace{-1.5em}
            \caption{Loss changes over episodes.}
        \end{subfigure}
    \end{minipage}
    \caption{\small The convergence performance verification of the proposed BD3QN algorithm at different learning rates. (a) Reward changes over episodes. (b) Loss changes over episodes.}
    \vspace{-2em}
    \label{fig_convergence_vs_learning_rate}
\end{figure*}

\subsection{Baseline Schemes for Comparison}

\par To comprehensively evaluate the effectiveness of the proposed BD3QN algorithm and joint task offloading and resource allocation scheme (denoted as \textit{\textbf{Proposed}}), we compare it with several representative baseline schemes, including classical heuristics and state-of-the-art DRL-based algorithms:

\begin{itemize}
    \item \textbf{Greedy Algorithm} \cite{8968760}: A heuristic method that prioritizes IIoT devices with higher AoI values for task offloading, while allocating idle BSs accordingly to minimize the overall AoI.

    \item \textbf{Random Offloading}: A benchmark strategy in which idle BSs randomly select IIoT devices for task offloading, serving as a performance lower bound.

    \item \textbf{DQN Algorithm} \cite{10380323, mnih2015human}: A DRL-based method that employs the standard DQN architecture to optimize the long-term average AoI without considering overestimation or state–action value decomposition.

    \textbf{D3QN Algorithm} \cite{10155465, zabihi2023reinforcement, 10660558}: An enhanced DRL approach that integrates DDQN and dueling DQN network architectures to mitigate overestimation and improve state–action value estimation for long-term average AoI minimization.
\end{itemize}

\vspace{-1.0em}

\subsection{Experimental Results}

\vspace{-0.5em}

\subsubsection{Convergence Verification}

\par We first evaluate the convergence performance of the proposed BD3QN algorithm and compare it with the popular DQN and D3QN algorithms. Convergence is a prerequisite for ensuring the stability and reliability of reinforcement learning models. In this experiment, the average reward is adopted as the evaluation metric. In reinforcement learning, the definition of an episode is usually defined as a complete sequence of interactions between the agent and the environment until reaching a terminal state. This process can be expressed as a complete sequence of <state--action--reward>, denoted as $<s_{1}, a_{1}, r_{1}, s_{2}, a_{2}, r_{2}, \cdots, s_{n}, a_{n}, r_{n}>$. Each episode thus represents a complete instance of the task from the start to the end, during which the agent continuously learns and adapts to the environment.

\par As illustrated in Fig. \ref{fig_convergence} (a) and (b), the proposed BD3QN algorithm exhibits a significantly faster and more stable convergence behavior compared with both DQN and D3QN algorithms. It can be observed that the proposed BD3QN algorithm achieves at least a $75\%$ improvement in convergence speed with respect to both reward and loss, clearly highlighting its efficiency advantage. This remarkable improvement stems from our proposed branching structure, which reduces the action space complexity from the exponential level $(M+1)^{N}$ to the linear level $(M+1)N+1$, thereby significantly alleviating the curse of dimensionality and accelerating convergence. Meanwhile, D3QN shows better convergence than DQN due to its dueling architecture, which effectively distinguishes between state values and action advantages, leading to more accurate Q-value estimation. Nevertheless, its performance still lags far behind the proposed BD3QN algorithm. These results collectively confirm that the proposed BD3QN algorithm not only achieves faster convergence but also maintains superior stability throughout training.

\par Furthermore, Fig. \ref{fig_convergence_vs_learning_rate} illustrates the impact of learning rate on the convergence behavior of the proposed BD3QN algorithm. These experimental results demonstrate that the learning rate critically influence both convergence speed and stability. A relatively small learning rate allows for fine-grained parameter updates, leading to stable and precise convergence, but requires more episodes to reach steady performance. Conversely, an excessively high learning rate accelerates early convergence but often induces oscillations near the optimum or even divergence, thereby degrading overall performance. This observation underscores the inherent trade-off between convergence speed and stability in learning rate selection. In extreme cases, an overly small learning rate may also trap the algorithm in local optima, preventing convergence within a limited number of training episodes.

\subsubsection{Performance Comparison}

\par Next, we comprehensively compare the proposed BD3QN algorithm with baseline schemes in terms of average AoI performance. In DRL methods, it is common to observe the impact of different batch sizes on the performance and training efficiency of the model by changing the batch size during the training process \footnote{Batchsize denotes the number of training samples per iteration. Smaller batch sizes enhance randomness and help escape local optima but increase gradient variance and instability, while larger batch sizes yield more stable gradients and faster convergence at the cost of higher resource demands and potential premature convergence.}.

\par As shown in Fig. \ref{fig_AAoI_vs_Batchsize}, we analyze the impact of batch size on the average AoI under varying numbers of IIoT devices. Firstly, it can be observed that under the same Batchsize, the more IIoT devices accessed, the worse the average AoI performance of the system. Moreover, the results show that increasing batch size initially improves performance by reducing gradient variance and making updates closer to the optimal direction. However, once the gradient estimation becomes sufficiently accurate, further increases yield diminishing returns and may even restrict parameter exploration, reducing adaptability. In practice, batch sizes of 64 or 128 strike a good balance between training efficiency and model performance, making appropriate batch size selection crucial for effective model training.

\begin{figure}[htbp]
\centering
\includegraphics[scale=0.5]{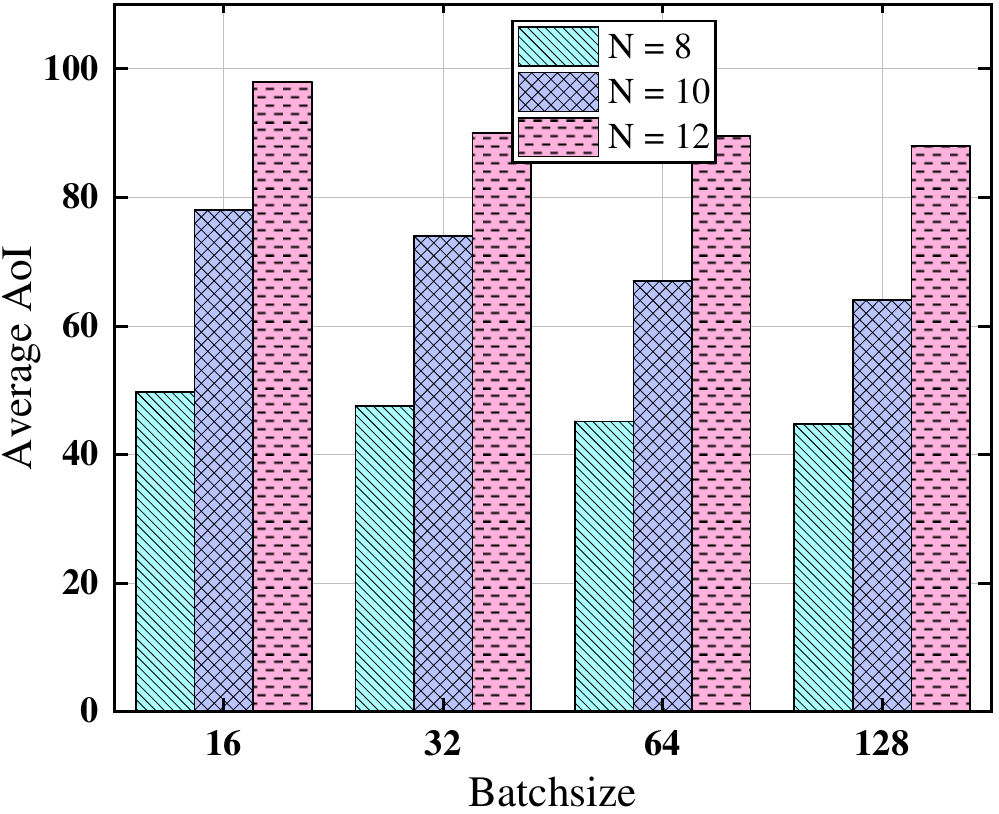}
\caption{The variation of average AoI with Batchsize size under different numbers of IIoT devices.}
\label{fig_AAoI_vs_Batchsize}
\end{figure}

\par In real-time monitoring systems, the task arrival rate $\lambda$ ($0 \leq \lambda \leq 1$) critically influences AoI performance. The task arrival rate quantifies the frequency of new task generation per unit time, where higher values accelerate task generation, thereby increasing device update frequency. In this case, the system can process and respond to the latest data faster, further optimizing system performance and user experience. Conversely, a lower $\lambda$ reduces update frequency, leading to relatively outdated information on the device, increased average AoI, and diminished the timeliness and efficiency of the system. As shown in Fig.~\ref{fig_AAoI_vs_Task_arrival_rate}, we conduct experiments to demonstrate that average AoI decreases with the increase of $\lambda$ since frequent data generation maintains data freshness at device nodes. Furthermore, it can be observed that the proposed BD3QN algorithm significantly outperforms benchmarks, achieving the 2\% average AoI reduction over D3QN, 10.4\% over DQN, and 17.4\% over the Greedy algorithm at $\lambda = 0.8$, underscoring the superior performance gains of the proposed algorithm compared to other baseline schemes in terms of high task arrival rates, data freshness, and system efficiency.

\begin{figure}[htbp]
\centering
\includegraphics[scale=0.5]{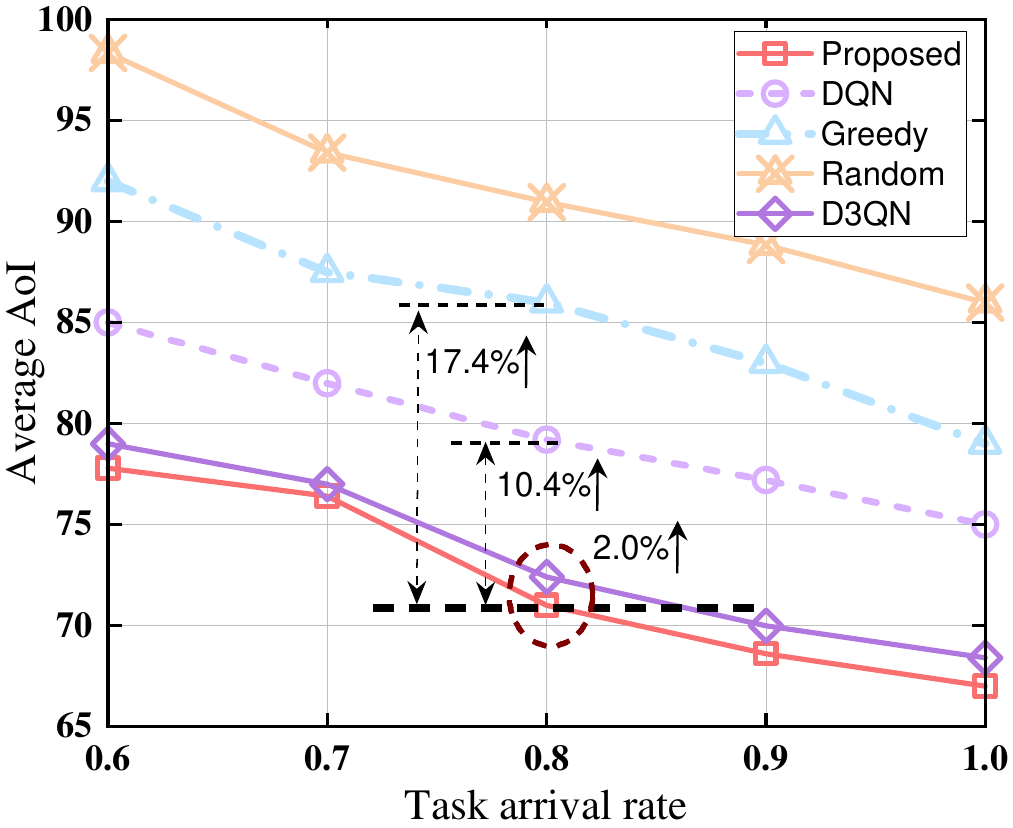}
\caption{Performance comparison of average AoI variation with different task arrival rates.} 
\label{fig_AAoI_vs_Task_arrival_rate}
\end{figure}

\begin{figure}[htbp]
\centering
\includegraphics[scale=0.4]{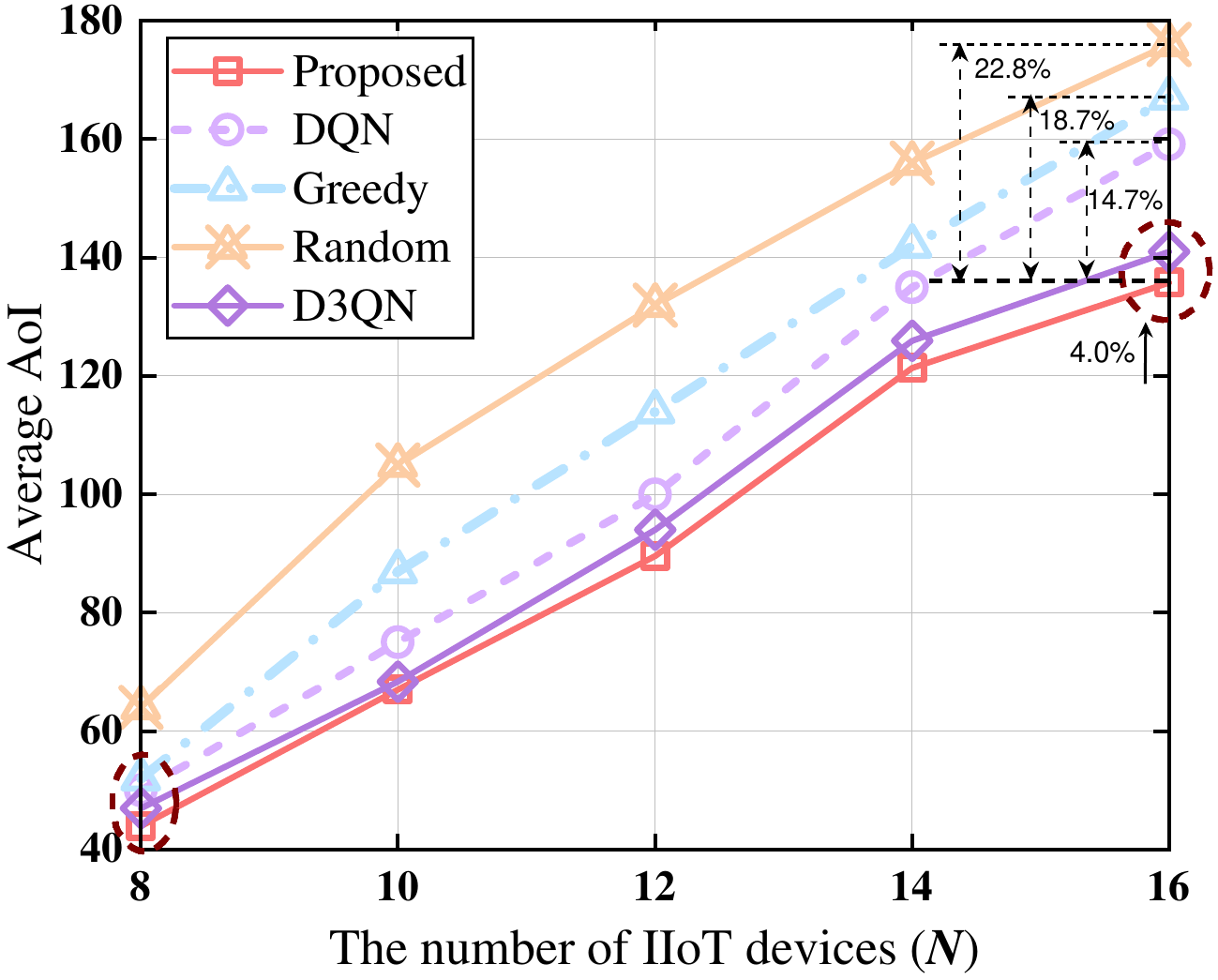}
\caption{\small Performance comparison of average AoI with different numbers of IIoT devices.}
\label{fig_AAoI_vs_IIoT}
\end{figure}

\par As illustrated in Fig. \ref{fig_AAoI_vs_IIoT}, we investigate the average AoI performance of the system under different numbers of IIoT devices. It can be observed that as the number of IIoT devices increases, the system's average AoI also grows accordingly due to heightened resource contention. However, compared to other baseline schemes, the proposed BD3QN algorithm demonstrates superior performance. Specifically, when the scale of IIoT devices is small (e.g., $N = 8$), compared to D3QN, DQN, and Greedy, the performance gains achieved by the proposed schemes in average AoI are not as pronounced; however, the proposed schemes achieve significant AoI reductions compared to the Random Offloading strategy. Nevertheless, as the scale of IIoT devices gradually increases (e.g., $N = 16$), the proposed algorithm achieves AoI reductions of 4\%, 14.7\%, 18.7\%, and 22.8\% over D3QN, DQN, Greedy, and Random Offloading, respectively. These results demonstrate that the proposed algorithm’s performance advantage increases with the scale of IIoT devices, underscoring its scalability and efficacy in dense network environments.

\begin{figure}[htbp]
\centering
\includegraphics[scale=0.4]{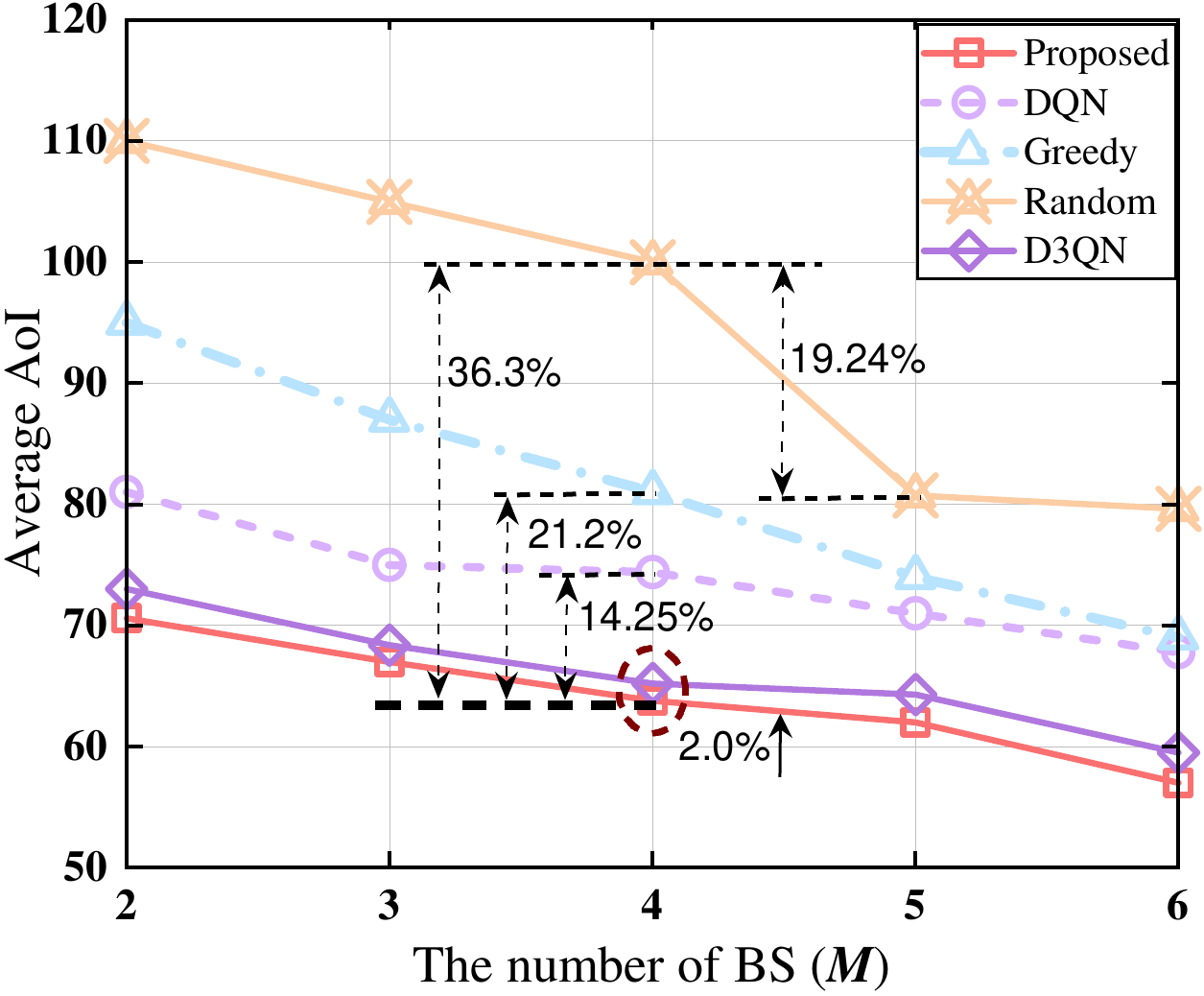}
\caption{\small Performance comparison of average AoI with different numbers of BS-MEC servers.}
\label{fig_AAoI_vs_BS}
\end{figure}

\par As depicted in Fig. \ref{fig_AAoI_vs_BS}, we analyze the system's average AoI performance under different numbers of BSs. It can be seen that as the number of BSs increases, the average AoI performance is significantly improved. For instance, with the Random Offloading strategy, when the number of BSs increases from 4 to 5, the system's average AoI performance improves by 19.24\%. However, the continuous increase in the number of BSs does not significantly improve the performance of the average AoI. This is primarily because when the number of BSs exceeds a certain threshold, the impact of performance degradation caused by the uneven offloading resulting from the random deployment of the Random Offloading strategy diminishes. Additionally, when $M = 4$, the proposed algorithm shows a significant improvement in average AoI performance compared to D3QN, DQN, and Greedy, with performance gains of 2\%, 14.25\%, 21.2\%, and 36.3\%, respectively. These results highlight the proposed algorithm's superior scalability and efficacy in leveraging increased BS availability to optimize data freshness in multi-BS IIoT systems.

\par Drawing from Fig. \ref{fig_AAoI_vs_IIoT} and Fig. \ref{fig_AAoI_vs_BS}, the superior performance of the proposed schemes stems from their efficient learning and selection of optimal task offloading strategies, significantly reducing the system’s average AoI. For example, when the number of IIoT devices is $N = 10$ and the number of base stations is $M = 5$, the proposed schemes reduce the average AoI by 23\% and 12\% compared to the Greedy algorithm and DQN algorithm, respectively, and outperform the D3QN algorithm. These results suggest that the introduced branching structure can effectively enhance the system's real-time performance and data update efficiency by quickly determining the best offloading decisions. Furthermore, we can observe that increasing the number of IIoT devices can heighten the competition for offloading among IIoT devices, and the contention for BS services becomes more intense, leading to an increase in the system's average AoI. Conversely, more BSs reduce resource contention since more IIoT devices have the opportunity to offload data to the BSs, thereby reducing the average AoI. These dynamics highlight the need to balance device and BS counts in offloading strategies. While additional BSs enhance data freshness, scaling IIoT devices increases costs. Therefore, in practical applications, appropriate algorithms and strategies should be adopted based on specific requirements and resource conditions to achieve optimal system performance.

\begin{figure}[htbp]
\centering
\includegraphics[scale=0.4]{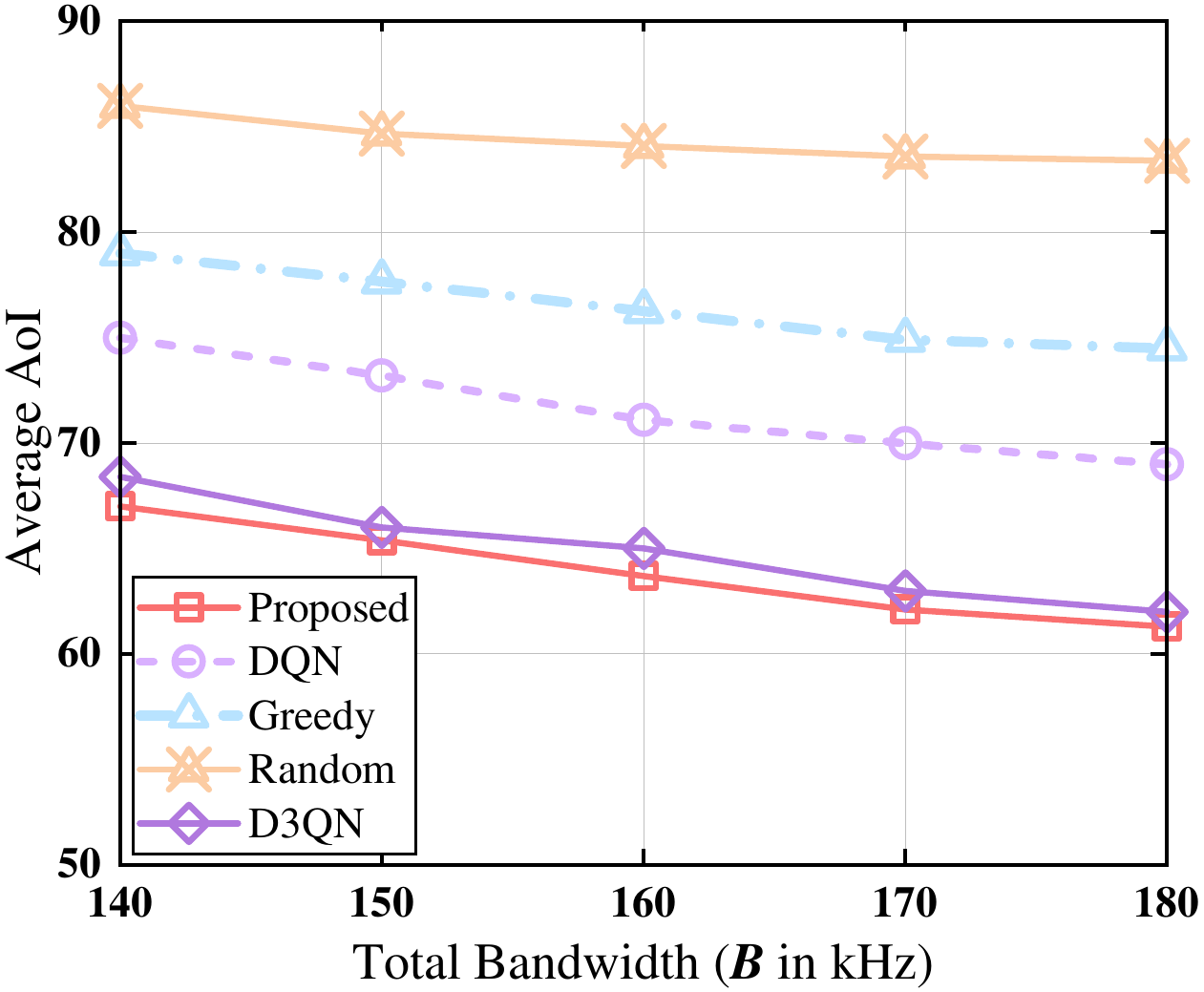}
\caption{\small Performance comparison of average AoI with varying bandwidth resources.}
\label{fig_AAoI_vs_BW}
\end{figure}

\par Additionally, resource allocation also significantly influences the system's AoI performance. As shown in Fig. \ref{fig_AAoI_vs_BW} and Fig. \ref{fig_AAoI_vs_Comp}, as bandwidth and computational resources increase, the system's average AoI decreases, indicating an improvement in the system's timeliness. According to Eq. (\ref{eq3}), when the amount of bandwidth allocated to a device increases, the device's transmission time will decrease. Similarly, as indicated by Eq. (\ref{eq5}), when the amount of computational resources obtained by a device increases, its computation time will also shorten accordingly. These two changes are directly related to the calculation of the system's AoI. Based on the definition of AoI in Eq. (\ref{eq8}), the system's average is closely related to the transmission and computation delays. However, as can be further observed from Fig. \ref{fig_AAoI_vs_BW} and Fig. \ref{fig_AAoI_vs_Comp}, when bandwidth and computational resources increase to a certain threshold, the AoI reduction of IIoT devices gradually slows down and even tends to stabilize. This is mainly because excessive resource allocation yields marginal improvements in transmission and computation delays, limiting further AoI decreases. Therefore, optimal resource allocation is critical to balance timeliness and resource utilization in real-time systems.

\begin{figure}[htbp]
\centering
\includegraphics[scale=0.4]{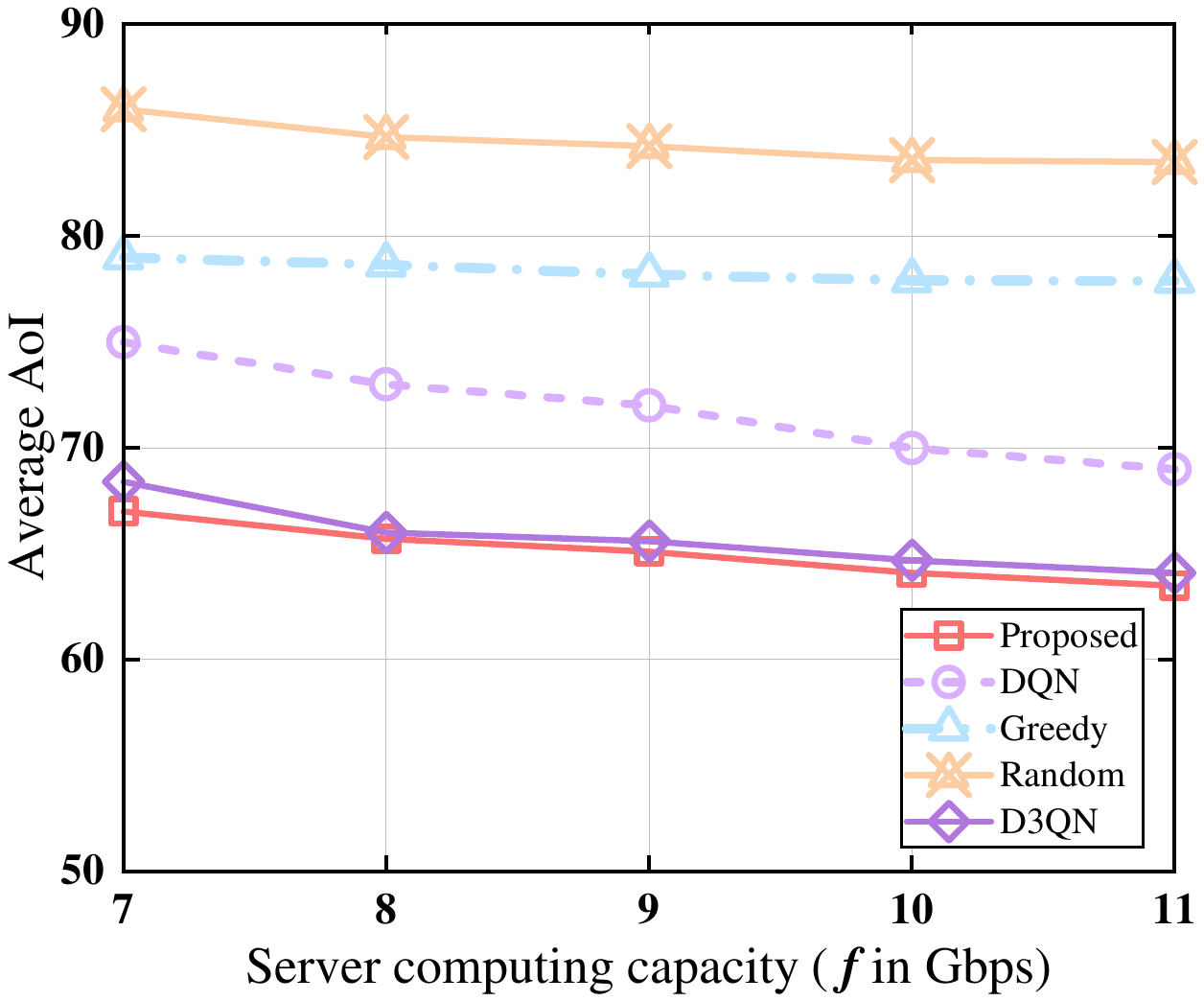}
\caption{\small Performance comparison of average AoI with varying server computation resources.}
\label{fig_AAoI_vs_Comp}
\end{figure}

\section{Conclusion}
In this paper, we propose an AoI-aware multi-BS-MEC real-time monitoring system to facilitate IIoT access, where a joint task offloading and resource allocation optimization problem is formulated to minimize the long-term average AoI. To effectively tackle this non-convex dynamic stochastic optimization problem, we decomposed it into the task offloading and resource allocation subproblems. The task offloading problem is equivalently reformulated as a CMDP and solved via a novel Branching D3QN algorithm, which reduces the action-space complexity from exponential to linear levels. The resource allocation subproblem is proved convex by analyzing the semi-definite Hessian matrix of bandwidth and computation resources. By integrating both via an alternating optimization framework, we develop a joint scheme that achieves near-optimal AoI performance. Extensive simulations demonstrate that the proposed BD3QN algorithm surpasses state-of-the-art DRL and heuristic methods, achieving up to 75\% faster convergence and at least 22\% lower long-term average AoI.

\par Although this work addresses large-scale IIoT access in multi-BS scenarios, it assumes that each IIoT device is associated with a single BS. In real-world scenarios, devices may dynamically switch among multiple BS, leading to service migration. In future work, we intend to extend our framework to incorporate service migration, aiming to design more robust task offloading strategies with stronger adaptability to highly dynamic IIoT environments.

\footnotesize
\bibliographystyle{IEEEtranN}
\bibliography{IEEEabrv,ref}

\vspace{-0.8cm}
\begin{IEEEbiography}[{\includegraphics[width=1in,height=1.25in,clip,keepaspectratio]{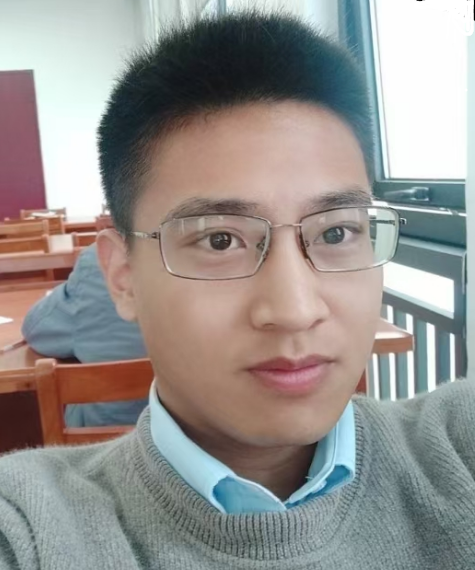}}]{Yuang Chen (Graduate Student Member, IEEE)}
	received the B.S. degree from the Hefei University of Technology (HFUT), Hefei, China, in 2021. He is currently pursuing a doctoral degree in Electronic Engineering and Information Science at the University of Science and Technology of China, Hefei, China. In addition, he is currently working as a research assistant in the department of computing at The Hong Kong Polytechnic University, Hong Kong SAR, China. His research interests include 5G/6G wireless network technology, such as next-generation URLLC, next-generation multiple access technology, wireless network resource allocation and performance optimization, microservice deployment and scheduling, etc. 
\end{IEEEbiography}

\vspace{-1cm}

\begin{IEEEbiography}[{\includegraphics[width=1in,height=1.25in,clip,keepaspectratio]{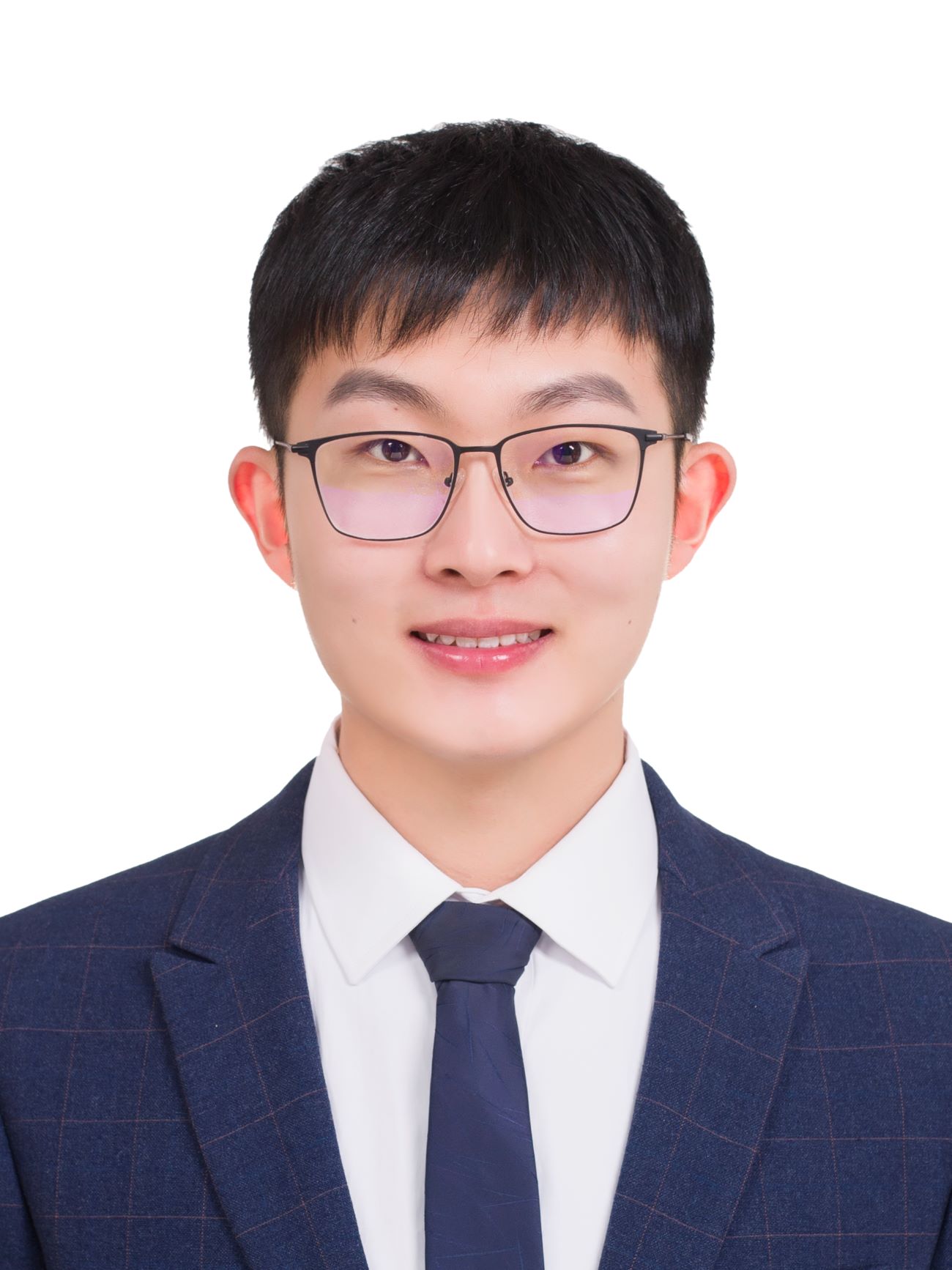}}]{Fengqian Guo}
	received the Ph.D. degree in communication and information systems from the University of Science and Technology of China (USTC), Hefei, China, in 2022. He is currently a jointly trained postdoctoral Researcher with the University of Science and Technology of China and Tencent. His research interests include wireless low-latency transmission and wireless resource optimization.
\end{IEEEbiography}

\vspace{-1cm}

\begin{IEEEbiography}[{\includegraphics[width=1in,height=1.25in,clip,keepaspectratio]{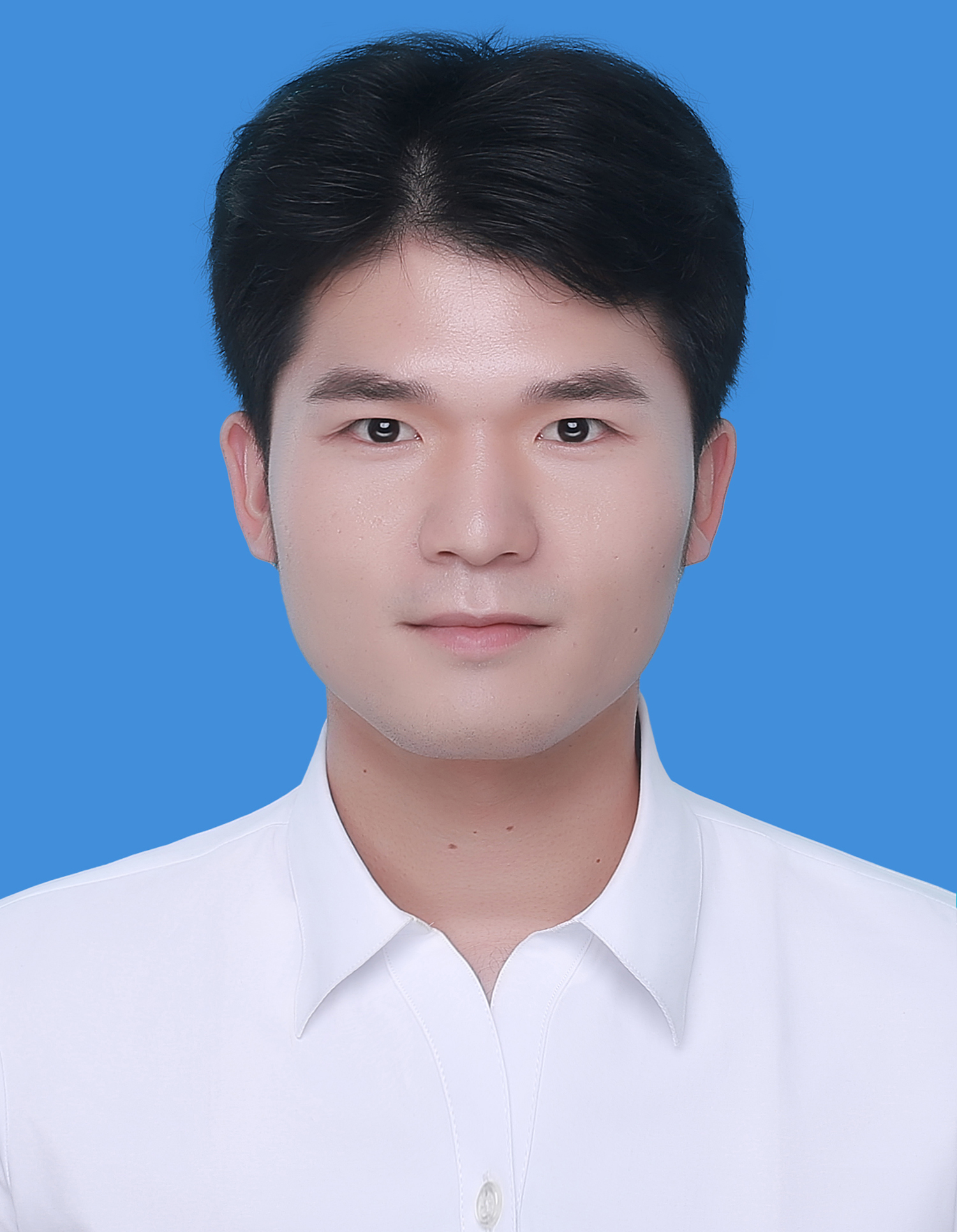}}]{Chang Wu}
	received the B.S. degree from the Dalian Maritime University (DLMU), Dalian, China, in 2021. He is currently working toward the PhD degree in communication and information systems with the Department of Electronic Engineering and Information Science, University of Science and Technology of China (USTC), Hefei, China. His research interests include 5G/6G wireless network technologies, such as architecture, QoS/QoE provision for business transmission and Deep Reinforcement Learning in performance optimization.
\end{IEEEbiography}

\vspace{-1cm}

\begin{IEEEbiography}[{\includegraphics[width=1in,height=1.2in,clip,keepaspectratio]{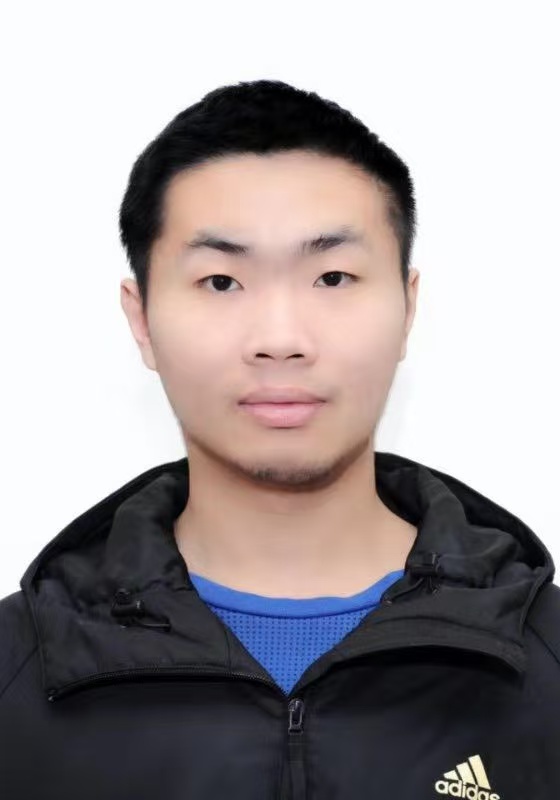}}]{Shuyi Liu}
	received the B.S. degree from Lanzhou University (LZU), Lanzhou, China, in 2021. He iscurrently pursuing the Ph.D.degree with the Departmentof Electronic Engineering and Information Science, University of Science and Technology ofChina (USTC), Hefei, China. His research interests include traffic engineering, segment routing, and deep reinforcement learning.
\end{IEEEbiography}

\vspace{-1.3cm}

%
%
%
%
%
\begin{IEEEbiography}[{\includegraphics[width=1in,height=1.25in,clip,keepaspectratio]{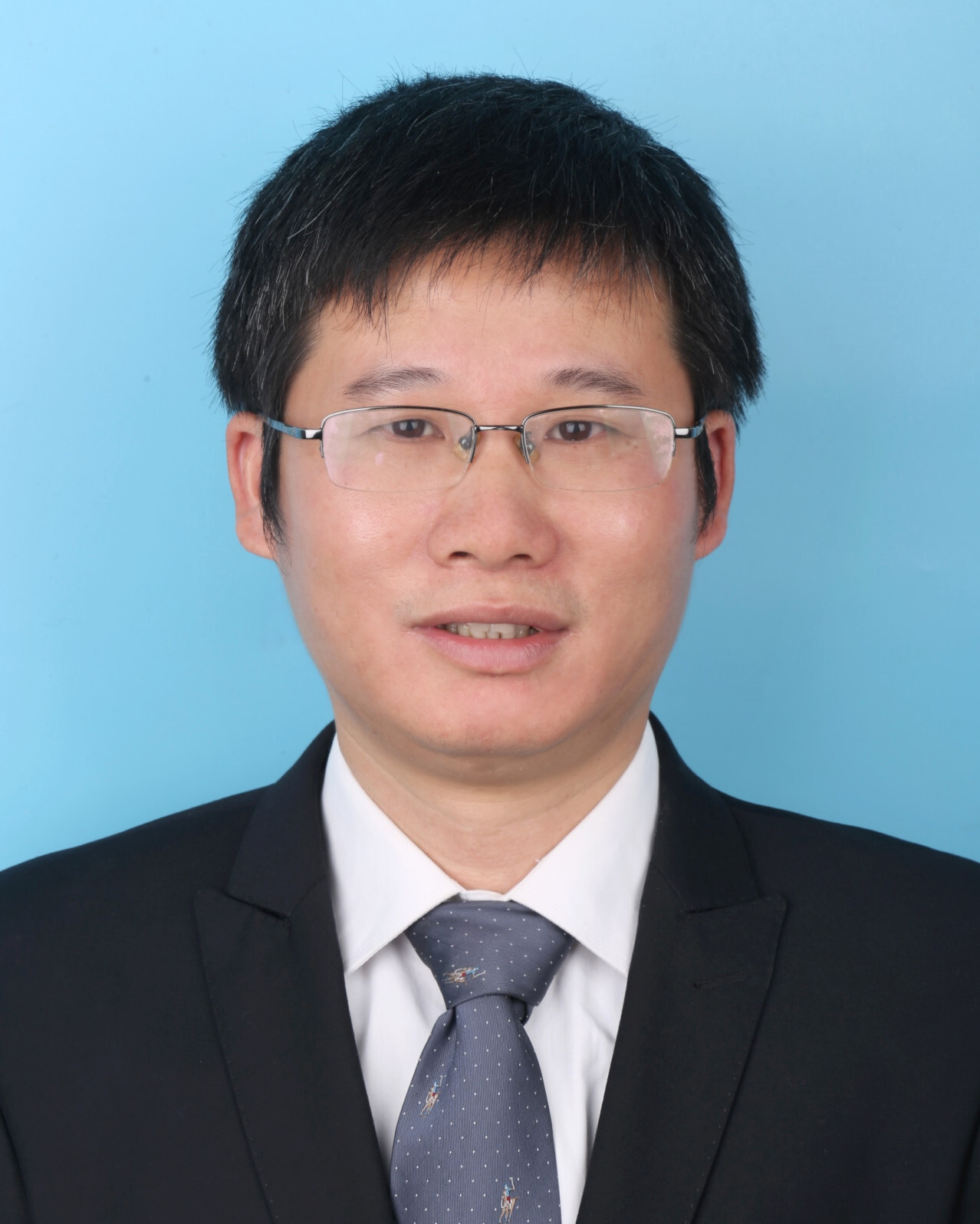}}]{Hancheng Lu (Senior Member, IEEE)} received his Ph.D. in communication and information systems from the University of Science and Technology of China, Hefei, China, in 2005. He is currently a tenured professor in the Department of Electronic Engineering and Information Science at the University of Science and Technology of China. He is also working at the Hefei National Comprehensive Science Center Artificial Intelligence Research Institute, Hefei, China. He has rich research experience in multimedia communication, wireless edge networks, future network architecture and protocols, as well as machine learning algorithms for network communication, involving scheduling, resource management, routing, transmission, and other fields. In the past 5 years, more than 80 papers have been published in top journals such as IEEE Trans and flagship conferences such as IEEE INFOCOM in this field, and have won the Best Paper Award of IEEE GLOBECOM 2021 and the Best Paper Award of WCSP 2019 and WCSP 2016 in the field of communication. In addition, he currently serves as an editorial board member for numerous journals such as the IEEE Internet of Things Journal, China Communications, and IET Communications.
\end{IEEEbiography}

\vspace{-1cm}

\begin{IEEEbiography}[{\includegraphics[width=1.4in,height=1.3in,clip,keepaspectratio]{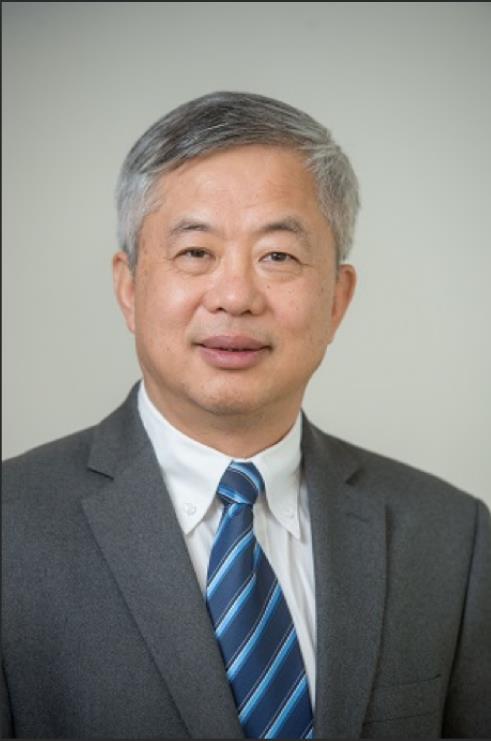}}]{Chang Wen Chen (Life Fellow, IEEE)} received the B.S. degree from the University of Science and Technology of China in 1983, the M.S.E.E. degree from the University of Southern California in 1986, and the Ph.D. degree from the University of Illinois at Urbana–Champaign in 1992. He is currently the Chair Professor in the Department of Visual Computing and the Interim Dean of the School of Computer and Mathematical Sciences at The Hong Kong Polytechnic University. Before his current position, he served as the Dean of the School of Science and Engineering, The Chinese University of Hong Kong, Shenzhen, from 2017 to 2020. He was an Empire Innovation Professor with the University at Buffalo, The State University of New York, from 2008 to 2021. He was an Allen Henry Endow Chair Professor with Florida Institute of Technology from 2003 to 2007. He was a Faculty Member of electrical and computer engineering at the University of Rochester from 1992 to 1996 and at the University of Missouri-Columbia from 1996 to 2003. His research has been funded by both government agencies and industrial corporations. His research interests include multimedia communication, multimedia systems, mobile video streaming, the Internet of Video Things (IoVT), image/video processing, computer vision, deep learning, multimedia signal processing, and immersive mobile video. He was a SPIE Fellow in 2007 and an Elected Member of Academia Europaea in 2021. He and his students have received ten best paper awards or best student paper awards over the past two decades. He received several research and professional achievement awards, such as the Sigma Xi Excellence in Graduate Research Mentoring Award in 2003, the Alexander von Humboldt Research Award in 2010, the University at Buffalo Exceptional Scholar—Sustained Achievement Award in 2012, the SUNY System Chancellor's Award for Excellence in Scholarship and Creative Activities in 2016, the University of Illinois ECE Distinguished Alumni Award in 2019, the Outstanding Overseas Contributor of the China Society of Image and Graphics (CSIS) in China MM 2024, and the SIGMM Outstanding Technical Achievement Award in 2024. He is currently an Associate Editor-in-Chief of IEEE TRANSACTIONS ON BIOMETRICS, BEHAVIOR, AND IDENTITY SCIENCE and a Deputy Editor-in-Chief of the IEEE TRANSACTIONS ON IMAGE PROCESSING. He served as the conference chair for several major IEEE, ACM, and SPIE conferences related to multimedia communications and signal processing. He served as the Editor-in-Chief for IEEE TRANSACTIONS ON MULTIMEDIA from January 2014 to December 2016 and IEEE TRANSACTIONS ON CIRCUITS AND SYSTEMS FOR VIDEO TECHNOLOGY from January 2006 to December 2009. He has been an Editor of several other major IEEE TRANSACTIONS and journals, including PROCEEDINGS OF THE IEEE, IEEE JOURNAL OF SELECTED TOPICS IN SIGNAL PROCESSING, IEEE JOURNAL OF SELECTED AREAS IN COMMUNICATIONS, and IEEE JOURNAL OF SELECTED TOPICS IN SIGNAL PROCESSING.
\end{IEEEbiography}

\end{document}